\newtheorem{Theorem}{Theorem}
\newtheorem{Lemma}[Theorem]{Lemma}
\newtheorem{remark}{Remark}
\newenvironment{proof}{{\noindent\bf Proof.\ }}{\hfill{\Pisymbol{pzd}{113}}\vspace{0.1in}}
\newenvironment{proof-sketch}{{\noindent\bf Sketch of proof.\ }}{\hfill{\Pisymbol{pzd}{113}}\vspace{0.1in}}
\newcommand{\TB}{\vspace{-0.1ex}}\newcommand{\TiE}{\setlength{\itemsep}{-1ex}}
\newcommand{\IE}{{\em i.e.}\xspace}
\newcommand{\EG}{{\em e.g.}\xspace}
\newcommand{\EA}{{\em et al.}\xspace}
\newcommand{\FI}[1]{Fig.~\ref{#1}}
\newcommand{\tx}{^{\rm th}}
\newcommand{\cS}{\mathcal{S}}
\newcommand{\cT}{\mathcal{T}}
\newcommand{\cU}{\mathcal{U}}
\newcommand{\cP}{\mathcal{P}}
\newcommand{\eps}{\varepsilon}
\newcommand{\ignore}[1]{}
\newcommand{\halmos}{\rule{1ex}{1.4ex}}
\newcommand{\qed}{\hfill \halmos} 
\newcommand{\be}[1]{\begin{equation}\label{#1}}
\newcommand{\ee}{\end{equation}}
\newcommand{\bi}{\begin{itemize}}
\newcommand{\ei}{\end{itemize}}
\newcommand{\ben}{\begin{enumerate}}
\newcommand{\een}{\end{enumerate}}
\newcommand{\cA}{\mathcal{A}}
\newcommand{\bl}[1]{\begin{Lema}\label{#1}}
\newcommand{\el}{\qed\end{Lema}}
\newcommand{\bt}[1]{\begin{Teo}\label{#1}}
\newcommand{\et}{\end{Teo}}
\newcommand{\epr}{\end{proof}}
\newcommand{\bpr}{\begin{proof}}
\newcommand{\eeqn}{\end{eqnarray*}}
\newcommand{\comment}[1]{}
\newcommand{\All}{{4-ALLELE}}
\newcommand{\TAll}{{2-ALLELE}}
\newcommand{\KAll}{{$k$-ALLELE}}
\newcommand{\bee}{\mathbf{e}}
\title{
On Approximating Four Covering and Packing Problems 
}
\author{
Mary Ashley\thanks{Supported by NSF grant IIS-0612044.} \\
Department of Biological Sciences \\
University of Illinois at Chicago \\
Chicago, IL 60607-7053 \\
Email: {\tt ashley@uic.edu} \\
\and
Tanya Berger-Wolf$\,\,\,^\dagger$ \\
Department of Computer Science \\
University of Illinois at Chicago \\
Chicago, IL 60607-7053 \\
Email: {\tt tanyabw@cs.uic.edu} \\
\and
Piotr Berman \\
Department of Computer Science \& Engineering \\
Pennsylvania State University \\
University Park, PA 16802 \\
Email: {\tt berman@cse.psu.edu} \\
\and
Wanpracha Chaovalitwongse$\,\,\,\dagger$ \\
Department of Industrial Engineering \\
Rutgers University \\
New Brunswick, NJ 08854 \\
Email: {\tt wchaoval@rci.rutgers.edu} \\
\and
Bhaskar DasGupta\thanks{Supported by NSF grants DBI-0543365, IIS-0612044, IIS-0346973
and DIMACS special focus on Computational and Mathematical Epidemiology.} \\
Department of Computer Science \\
University of Illinois at Chicago \\
Chicago, IL 60607-7053 \\
Email: {\tt dasgupta@cs.uic.edu} \\
\and
Ming-Yang Kao \\
Department of Electrical Engineering \& Computer Science \\
Northwestern University \\
Evanston, IL 60208 \\
Email: {\tt kao@cs.northwestern.edu} \\
}
\begin{document}

\maketitle

\begin{abstract}
In this paper, we consider approximability issues of 
the following four problems: 
{\em triangle packing}, 
{\em full sibling reconstruction},  
{\em maximum profit coverage} and 
{\em $2$-coverage}.
All of them are generalized or specialized
versions of set-cover and have applications in 
biology ranging from full-sibling reconstructions 
in wild populations
to biomolecular clusterings; however, as this paper shows, their 
approximability properties differ considerably.  Our inapproximability constant
for the triangle packing problem improves upon the
previous results in~\cite{CC06,CR02};
this is done by directly transforming the inapproximability gap of  
H\aa stad for the problem of maximizing the number
of satisfied equations for a set of equations over GF$(2)$~\cite{H97} 
and is interesting in its own right. 
Our approximability results on the full siblings reconstruction problems 
answers questions 
originally posed by Berger-Wolf \EA\ ~\cite{Berger-Wolf_etal.07,Berger-Wolf_etal.05} and 
our results on the maximum profit coverage problem 
provides almost matching upper and lower bounds 
on the approximation ratio, answering a question
posed by Hassin and Or~\cite{HO06}.
\end{abstract}

\section{Introduction}

We consider four combinatorial optimization problems motivated by four separate applications
in computational biology. Each of them concerns with packing or covering and
falls under a general framework of covering/packing as described below.  
In the general framework, we have a finite universe of elements
and a collection of sets contained in the universe.  Optional parameters
can be added to the problem statement to specify problems in this framework,
and in this paper we use the following (in different combinations):
non-negative weights for elements, non-negative weights of sets, a limit
on the number of sets that can be selected, the minimum number of selected
sets that contain an element, and a family of ``conflicts'', pairs of sets
such that at most one set from a conflict pair can be selected.
Our goal is to select a sub-collection of sets that
satisfies the constraints (like covering all nodes as required or not
containing conflict pairs) and
that optimizes an objective function which is linear in terms
of the weights of the sets and elements in our selection.
For example, both the minimum weight set-cover and
the maximum weight
coverage problem falls under the above framework.
We start out 
with the precise definitions of our problems and later describe their
motivations. 

\paragraph{Triangle Packing Problem (TP)~\cite{CR02,GRCCW98,HS89}}
We are given an undirected graph $G$.  A triangle is a cycle of 
$3$ nodes. The goal is to find (pack) a maximum number of
{\em node-disjoint} triangles in $G$. 

\paragraph{Full Sibling Reconstruction Problems (\KAll$_{n,\ell}$ for 
$k\in\{2,4\}$)~\cite{ashley-et-al-to-appear,Berger-Wolf_etal.05,Berger-Wolf_etal.07,Chaovalitwongse_etal.06,saad1,saad2}} 

Here the universe $\cU$ consists of $n$ elements. 
To partially motivate the problem, think of each element as an individual
in a wild population. 
Each element $p$ is
a sequence
$(p_1,p_2,\ldots,p_\ell)$ 
where each $p_j$ is a genetic trait ({\em locus}) and
is represented as an ordered pair $(p_{j,0},p_{j,1})$ of numbers
({\em alleles}) inherited from its parents.
We also use ${\bf p}_j$ to denote the set $\{p_{j,0},p_{j,1}\}$.
Certain sets of individuals can be full sibling, \IE having 
the {\em same} pair of parents under the 
Mendelian inheritance rule. 
These sets are specified in an {\em implicit} manner in the
following way. 
The Mendelian inheritance rule states that
an individual $p=(p_1,p_2,\ldots,p_\ell)$ can be a child of a pair of
{\em parents}, 
say father $q=(q_1,q_2,\ldots,q_\ell)$ and
mother $r=(r_1,r_2,\ldots,r_\ell)$, 
if for each $i\in\{1,\ldots,\ell\}$ we have
$p_{i,0}\in{\bf q}_i$ and
$p_{i,1}\in{\bf r}_i$, or
$p_{i,0}\in{\bf r}_i$ and
$p_{i,1}\in{\bf q}_i$; see Figure~\ref{allele-fig} for
a pictorial illustration.
This gives rise to two necessary
conditions for a set $\cA$ of elements to be 
full siblings.

\begin{wrapfigure}{r}{5in}
\epsfig{file=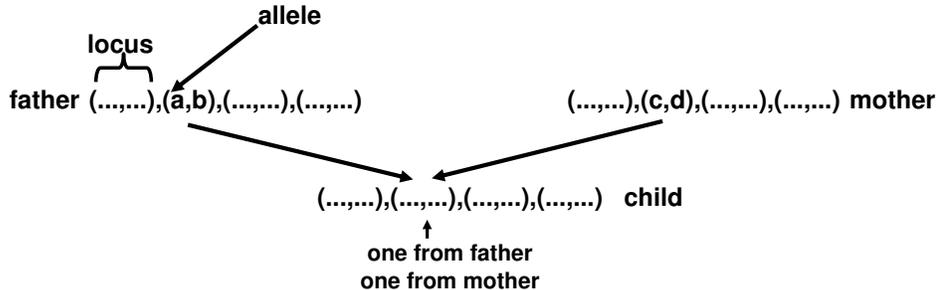,width=4in}
\caption{\label{allele-fig}
Illustration of the Mendelian inheritance rule.
} 
\end{wrapfigure}

Since each individual is generated by the same set of parents, each
having at most two distinct alleles in each locus, 
a set $\cA$ of elements can be full siblings if at most $4$ alleles occur
in each locus, i.e.,
$|\,\cup_{p\in\cA}\,{\bf p}_j\,|\leq 4$
for every $j\in\{1,2,\ldots,\ell\}$. 
Sets generated in this manner are said to satisfy the 
{\tt $4$-allele condition}. Notice that the
$4$-allele condition is not a sufficient condition for
individuals to be full siblings since it allows an individual
to inherit both its alleles from the same parent which violates
the Mendelian inheritance rule; nonetheless this condition is
used in practice since it is easy to check.

In a more precise way, the full sibling sets can be specified
via the {\tt $2$-allele condition}
described below.  In a full sibling set, we can reorder the alleles in each
locus of each individual in $\cA$ so that the first allele
always comes from the father
and second one
comes from the mother. Then, after such a reordering, 
a set $\cA$ of elements can be full siblings if at most $2$ alleles occur
in each coordinate of the locus. Formally, 
a set $\cA\subseteq \cU$ of elements 
satisfies the $2$-allele condition 
if and only if, 
for each 
$p\in\cA$ 
and 
each $j\in\{1,2,\ldots,\ell\}$, 
there exists a reordering
$\sigma_{p,j}=(\sigma_{p,j,1},\sigma_{p,j,2})
\in\{
(p_{j,0},p_{j,1}),
(p_{j,1},p_{j,0})\}
$ 
such that 
both 
$|\,\cup_{p\in\cA}\,\{\sigma_{p,j,1}\}\,|\leq 2$
and
$|\,\cup_{p\in\cA}\,\{\sigma_{p,j,2}\}\,|\leq 2$
for every $j\in\{1,2,\ldots,\ell\}$. 

With the Mendelian rules in mind,
the sets in the \KAll$_{n,\ell}$ problem are 
all possible sets of elements that satisfy
the $k$-allele condition for $k\in\{2,4\}$. 
The goal is then to find a collection of sets
that cover
the universe and the objective is to 
{\em minimize} the number of sets selected. 
As an example to illustrate the $k$-allele condition,
consider the $n=4$ elements (with $\ell=2$ loci) 
$p=(\{1,2\},\{5,5\})$,
$q=(\{3,4\},\{5,5\})$,
$r=(\{1,1\},\{5,5\})$ and 
$s=(\{5,5\},\{5,5\})$.
Then, there is no set containing all of $p,q,r$ and $s$ in 
either \All$_{4,2}$ or \TAll$_{4,2}$ because 
$|\,\{1,2\}\cup\{3,4\}\cup\{1,1\}\cup\{5,5\}\,|>4$, 
the set 
$\{p,q,r\}$ is contained in the instance of  
\All$_{4,2}$ but not in the instance of \TAll$_{4,2}$.

A natural parameter of interest in these problems is 
the maximum size (number of elements) $a$ of any set; 
we denote the corresponding problem by 
$a$-$k$-ALLELE$_{n,\ell}$ in some subsequent discussions.
One can make the following easy observations:
\begin{itemize}
\item
Both 2-\All$_{n,\ell}$ and 2-\TAll$_{n,\ell}$ are trivial
since any two elements always satisfy the $k$-allele condition
for $k\in\{2,4\}$.

\item
If $a$ is a constant, both 
$a$-\All$_{n,\ell}$ and $a$-\TAll$_{n,\ell}$
can be posed as a set-cover problem with a polynomially many sets with
the maximum set size being $a$ and thus 
have a $(1+\ln a)$-approximations (by using standard
algorithms for the set-cover problem~\cite{V01}). 

\item 
For general $a$, 
both 
$a$-\All$_{n,\ell}$ and $a$-\TAll$_{n,\ell}$
have a trivial $\left(\frac{a}{c}+\ln c\right)$-approximation for
any constant $c>0$ obtainable in the following manner.
For any integer constant $c>0$, it is trivial to find in polynomial time a set
of individuals that are full siblings for both  
\All$_{n,\ell}$ and \TAll$_{n,\ell}$, if such a set exists. 
Thus we can assume that for every induced instance of the
problem, either the maximum sibling group size is below $c$ and we can find such a group
of maximum size, or we can find a set of size $c$.  Obviously, we can assume
that if a sibling group can be used, we can use all its subsets too.
Consider an optimum solution, and make it disjoint.  We will distribute
the cost of an actual solution between the sets of the optimum.  When
a set with $b$ elements is selected, we remove each of its element
and charge the sets of the optimum $1/b$ for each removal.  It is easy
to see that a set with $a$ elements will get the sequence of charges
with values at most $(1/c,\ldots,1/c,1/(c-1),1/(c-2),\ldots,1)$
and these charges add to $\frac{a}{c}-1+\sum_{i=1}^{c}\frac{1}{i}$,
which in turn equals
$\frac{a}{c}+\sum_{i=2}^{c}\frac{1}{i}<\frac{a}{c}+\ln c$.
\end{itemize}

\paragraph{Maximum Profit Coverage Problem (MPC)~\cite{HO06}}
We have family of $m$ sets 
$\cS$ over a universe $\cU$ of $n$ elements.
For each $A\in\cS$ we have a non-negative {\em cost} $q_A$ and for each
$i\in\cU$ we have a non-negative {\em profit} $w_i$. 
We extend costs and profits to sets:
$q(\cP)=\sum_{S\in\cP}q_\cP$, and
$w(\cA)=\sum_{i\in\cA}w_i$.
For $\cP\subset\cS$ we define
the profit $c(\cP)=w(\cup_{A\in\cP}A)-q(\cP)$.
The goal is to find a subcollection of sets $\cP$ that maximizes $c(\cP)$.
A natural parameter for this problem is $a=\max_{A\in\cS} |A|$.
MPC admits a PTAS in the Euclidean space but otherwise its complexity
was unknown.

\paragraph{$2$-Coverage Problem}
Given $\cS$ and $\cU$ as in the MPC problem above and an integer $k>0$, 
a valid solution is
$\cP\subset\cS$ such that $|\cP|\leq k$; the goal is to {\em maximize} 
the number of elements 
that occur in {\em at least two} of the sets from $\cP$.
Another natural parameter of interest here is the 
{\em frequency} $f$, \IE, the maximum number of times any element
occurs in various sets. 

\subsection{Motivation}

In this section we discuss the motivations for the problems considered in this
paper. We discuss one motivation in details and
mention the remaining ones very briefly.

For wild populations, the growing development and application of
molecular markers provides new possibilities for 
the investigation of many fundamental biological phenomena, including
mating systems, selection and adaptation, kin selection, and dispersal
patterns. The power and potential of the genotypic information
obtained in these studies often rests in our ability to reconstruct
genealogical relationships among individuals. 
These relationships include parentage, full and half-sibships, and
higher order aspects of pedigrees~\cite{Blouin.03,Butler_etal.04,Jones_Ardren.03}. 
In our motivation we are only concerned with full sibling relationships
from single generation sample of microsatellite markers
Several methods for sibling reconstruction from 
microsatellite data have been proposed
~\cite{Almudevar_Field.99,Almudevar.03,Beyer_May.03,Konovalov_etal.04,Painter.97,Smith_etal.01,Thomas_Hill.02,Wang.04}.
Most of the currently available methods
use statistical likelihood models and
are inappropriate for wild populations. 
Recently, a fully combinatorial 
approach~\cite{ashley-et-al-to-appear,Berger-Wolf_etal.05,Berger-Wolf_etal.07,Chaovalitwongse_etal.06,saad1,saad2} 
to sibling reconstruction has been introduced.  This approach uses the simple
Mendelian inheritance rules to impose constraints on the genetic
content possibilities of a sibling group.  A formulation of the inferred
combinatorial constraints under the parsimony assumption of
constructing the smallest number of groups
of individuals that satisfy these constraints leads to the 
full sibling problems discussed in the paper. 
Both the $4$-allele and the $2$-allele constraints encode the above 
biological conditions for full siblings with varying strictness.
In this paper we 
study of computational complexity issues of these approaches.

MPC has applications in clustering
identification of molecules~\cite{HO06}.
The $2$-coverage problem has motivations in optimizing 
multiple spaced seeds for homology search (for relevant concepts, 
see \EG~\cite{XBLM07}). 
For application of TP to genome rearrangement problems, see~\cite{BP96,CR02}. 

\section{Several Useful Problems for Reductions}
\label{3maxcut-proven}

Several known problems were used for hardness results.
Below we list many of these problems
together with the known relevant results.  
Recall that a \emph{$(1+\varepsilon)$-approximate solution} (or
simply an $(1+\varepsilon)$-approximation) of a minimization (resp.
maximization) problem is a solution with an objective value no larger
(resp. no smaller) than $1+\varepsilon$ times (resp.  $(1+\eps)^{-1}$
times) the value of the optimum, and an algorithm achieving such a
solution is said to have an {\em approximation ratio} of at most
$1+\varepsilon$.
A problem is $r$-inapproximable 
under a certain complexity-theoretic
assumption means that the problem does not have a 
$r$-approximation unless the complexity-theoretic assumption is false.

\paragraph{$3$-LIN-$2$} 
We are given a set of linear equations modulo $2$ with $3$ variables per equation.
Our goal is to maximize the number of equations that
are satisfied with a certain value assignment to the variables.
A well-known result by H\aa stad~\cite{H97} shows the following result: 
for every $\eps<\frac{1}{2}$
it is NP-hard to differentiate between the instances that have at least
$(1-\eps)m$ satisfied equations 
from those that have at most 
$\left(\frac{1}{2}+\eps\right)m$
satisfied equations.

\paragraph{MAX-CUT on a $3$-regular graph ($3$-MAX-CUT)}
An instance is a $3$-regular graph, \IE, a graph $G=(V,E)$ where the degree of every vertex 
is exactly $3$ (and thus $|E|=\frac{3}{2}|V|$).
For a subset of vertices $V'\subseteq V$, define 
$score(V')$ to be the number of edges with exactly one endpoint in $V'$ and the
other endpoint in $V\setminus V'$.
The goal is then to find $V'\subseteq V$
such that $score(V')$
is maximized.
We will need the following inapproximability result for this problem
proved in~\cite{BK99}. 
For all sufficiently small constants $\eps>0$, 
it is impossible to decide, modulo RP$\neq$NP whether an instance $G$ of $3$-MAX-CUT with 
$|V|=336n$ vertices has a valid solution with a score below 
$(331-\eps)n$ or above $(332+\eps)n$.

\paragraph{Independent set problem for a $a$-regular graph (IS$_a$)}
A set of vertices are called independent if no two of them 
are connected by an edge. The goal is to find an independent set
of maximum cardinality when the input graph is $a$-regular, \IE, 
every vertex has degree $a$. 
It is well-known that 
this problem is
NP-hard for $a\geq 3$ and 
$a^c$-inapproximable
for general $a$ 
for some constant $0<c<1$ 
assuming
P$\neq$NP~\cite{AFWZ95,BS92,HSS03}.

\paragraph{Graph Coloring} 
The goal is to produce an assignment of colors to vertices 
of a given graph $G=(V,E)$ such that no two adjacent
vertices have the same color and 
the number of colors is {\em minimized}. 
Let $\Delta^\ast(G)$ denote 
the {\em maximum} number of independent vertices in a graph 
$G$
and $\chi^\ast(G)$ denote the minimum number of 
colors in a coloring of $G$.
The following inapproximability result is a 
straightforward extension of a hardness result 
known for coloring of $G$~\cite{FK96}:
for any two constants $0<\eps<\delta<1$, $\chi^\ast(G)$ cannot be 
approximated to within a factor of $|V|^{\eps}$ even if 
$\Delta^\ast(G)\leq |V|^{\delta}$ unless 
NP$\subseteq$ZPP.

\paragraph{Weighted set-packing} 
We have a collection of sets each with a non-negative weight over an universe.
Our goal is to select a collection of mutually disjoint sets of total maximum weight.
Let $a$ denote the maximum size of any set. 
For $a\leq 2$, weighted set-packing can be solved in polynomial time via 
maximum perfect matching in graphs.
For fixed $a>2$, Berman~\cite{B00} provided an approximation algorithm based on local improvements
for this problem produces an approximation ratio of $\frac{a+1}{2}+\eps$ for any constant $\eps>0$. 
When $a$ is {\em not} a constant, 
Algorithm $2$-IMP of Berman and Krysta~\cite{BerK03} provides 
an approximation ratio of $0.6454 a$ for any $a>4$. 

\paragraph{Densest Subgraph problem (DS)}
We are given a graph $G=(V,E)$ and a positive integer $0<k<|V|$.
The goal is to pick $k$ vertices such that the subgraph induced
by these vertices has the maximum average degree.
The densest subgraph problem is $(1+\eps)$-inapproximable
for some constant $\eps>0$ unless 
NP$\not\subseteq \cap_{\eps>0}$BPTIME$(2^{n^\eps})$~\cite{K04}.
A more general weighted version of DS admits a 
$O(|V|^{\frac{1}{3}-\eps})$-approximation for some constant $\eps>0$~\cite{FPK01}. 

\paragraph{Maximum coverage problem} 
This is the same as the $2$-coverage problem except that 
the number of elements that occur in at least {\em one}
of the selected sets is {\em maximized}.
Recall that $k$ is the number of sets that we are supposed to select
and $f$ is the frequency, \IE,
the maximum number of times any element
occurs in various sets. 
Let $\bee$ denote the base of natural logarithm.
It is known that the maximum coverage problem 
can be approximated
to within a ratio of 
$
1-\left(1-\frac{1}{k}\right)^k > 1-(1/\bee)
$
by a simple greedy algorithm~\cite{KMN99}
and approximation 
with ratio better than 
$
1-(1/\bee)
$
is not possible unless $P=NP$~\cite{F98}.
Obviously, the same lower bound carries over to $2$-coverage also
{\em for arbitrary} $f$.

\subsection{Our Results and Techniques}

The following table summarizes our results:


\begin{table}[h]
{\footnotesize
\hspace*{-0.5in}
\begin{tabular}{||cc||c|c|l||c||}
\hline
\multicolumn{2}{||c||}{Problem}& \multicolumn{3}{c||}{Lower Bound ($r$-inapproximability)}& Upper Bound \\ \cline{3-5}
         &                  & $r=$    &  assumption & reduction problem                & $r$-approximation for $r=$            \\
\hline
\hline
\multicolumn{2}{||c||}{Triangle Packing}                          & $(76/75)-\eps\approx 1.013$        & RP$\neq$NP & $3$-LIN-$2$ &      --- \\
\hline
\hline
\multicolumn{2}{||c||}{\{2,4\}-ALLELE$_{n,\ell}$}                                       &                                     &             &             & \\
$a=3$ & $\ell=O(n^3)$        & $(153/152)-\eps\approx 1.0065$     & RP$\neq$NP & Triangle Packing          &  --- \\
$a=3$ & any $\ell$         &  ---                                & ---         &             & $(7/6)+\eps\approx 1.166$ \\
$a=4$ & $\ell=2$             & $(6725/6724)-\eps\approx 1.00014$  & RP$\neq$NP  & $3$-MAX-CUT & --- \\
$a=4$ & any $\ell$         & ---                                 & ---         &             & $(3/2)+\eps$ \\
$a=n^\delta$ & $\ell=O(n^2)$ & $\Omega(n^\eps)\,\,\,\forall\eps<\delta$                    & ZPP$\neq$NP & graph coloring & $\eps n^\delta-\ln\eps$ $\,\,\forall$ constant $\eps$\\
\hline
\hline
\multicolumn{2}{||c||}{Maximum Profit Coverage}                                                 &                                     &             &             &  \\
\multicolumn{2}{||c||}{$a\leq 2$}                 &  ---                                & ---         & ---         &  polynomial-time \\
\multicolumn{2}{||c||}{$a\geq 3$}                 & NP-hard                             & ---         & $a$-regular indep. set      &  --- \\
\multicolumn{2}{||c||}{constant $a$}              & ---                                 & ---         & ---         & $0.5a+0.5+\eps$  \\
\multicolumn{2}{||c||}{any $a$}                   & $a^c$& P$\neq$NP   & $a$-regular indep. set      & $0.6454 a+\eps$  \\
\hline
\hline
\multicolumn{2}{||c||}{$2$-Coverage}                                                 &                                     &             &             &  \\
\multicolumn{2}{||c||}{$f=2$}                     & $1+\alpha$                          & NP$\not\subseteq \cap_{\eps>0}$BPTIME$(2^{n^\eps})$&  Densest Subgraph   & $O\left(m^{\frac{1}{3}-\beta}\right)$ \\
\multicolumn{2}{||c||}{any $f$}                   & ---                                 & ---                       & ---   & $O(\sqrt{m})$ \\
\hline
\end{tabular}
}
\caption{\small
Summary of results in this paper. By 
\{2,4\}-ALLELE$_{n,\ell}$ we mean that the results apply to {\bf both} 
\All$_{n,\ell}$ and \TAll$_{n,\ell}$. 
$0<\eps,\delta<1$ are {\bf any} two constants.
$\alpha$, $\beta$ and $c$ are {\bf specific} constants mentioned in~\cite{K04}, 
~\cite{FPK01}~and~\cite{HSS03}, respectively, but not explicitly calculated.
The parameters $a,\ell,f$ and $m$ 
are described in the definitions of the corresponding problems.
The $a^c$-inapproximability result for MPC holds 
{\bf even if} every set has weight $a-1$, every element has weight $1$, every set contains exactly $a$ elements and even if
we impose further restrictions such as 
each element is a point in some underlying metric space and each set correspond to a ball of
radius $\beta$ for some fixed specified $\beta$.
}
\end{table}

Brief descriptions of our techniques and comparisons with relevant previous results are
as follows.

\paragraph{Triangle Packing (TP)}
The lower bound is shown by a careful reduction 
from $3$-LIN-$2$ that {\em roughly} shows that, assuming RP$\neq NP$, it is hard to distinguish 
between instances of TP with profit (the number of disjoint triangles)
of at most $75k$ as opposed
to a profit of at least $76k$ for every $k$, thereby giving us an inapproximability
ratio of $\frac{76}{75}\approx 1.013$. Our inapproximability constant is larger
than the constant $\frac{95}{94}\approx 1.0106$ reported in~\cite{CC06} (assuming P$\neq$NP). 
A proof of Caprara and Rizzi~\cite{CR02} is yet earlier and it
implies a still worse inapproximability constant. 

\paragraph{\All$_{n,\ell}$ and \TAll$_{n,\ell}$} 
The inapproximability results for 
the {\em smallest} non-trivial value of $a$, 
namely $a=3$, and $\ell=O(n^3)$, are obtained 
by reducing TP to
instances in which the same sets satisfy $2$- and $4$-allele conditions and
each node of the initial graph (the TP instance)
is annotated with a sequence of loci so these sets coincide with triangles.
The $\left(\frac{7}{6}+\varepsilon\right)$-approximation for
any $\ell$ and any constant $\varepsilon>0$ 
is easily achieved using the results of
Hurkens and Schrijver~\cite{HS89}. 

The inapproximability results for 
the {\em second smallest} non-trivial values of $a$ and $\ell$, namely 
$a=4$ and $\ell=2$, are obtained by reducing $3$-MAX-CUT via 
an intermediate novel mapping of geometric nature.
The $\left(\frac{3}{2}+\eps\right)$-approximations
are achieved by using the result of
Berman and Krysta~\cite{BerK03}. 

The inapproximability result for 
$a=n^\delta$, namely {\em all sufficiently large} values of $a$, 
is obtained by reducing a suitable hard instance of the 
graph coloring problem.  

In general, for all the above reductions for 
\All$_{n,\ell}$ and \TAll$_{n,\ell}$
additional loci are used carefully to rule out 
possibilities that would violate the validity of our reductions. 

\paragraph{Maximum Profit Coverage (MPC)} 
The hardness reduction is from the IS$_a$ and the approximation algorithms 
are obtained via the weighted set-packing problem.
The $(0.6454 a+\eps)$-approximation for arbitrary $a$ is obtained via
a very careful polynomial-time dynamic programming implementation of the $2$-IMP
approach in Berman and Krysta~\cite{BerK03} that {\em implicitly} maintains subsets 
for possible candidates for improvement that {\em cannot be explicitly enumerated} 
due to their non-polynomial number. 

\paragraph{$2$-coverage} 
The inapproximability result and approximation algorithms for $f=2$ 
are obtained by identifying the problem with 
the DS problem. 
Note that the $1-(1/\bee)$-inapproximability result
for maximum coverage does not extend to $2$-coverage under the assumption of $f=2$.
For arbitrary $f$, 
we show a $O(\sqrt{m})$-approximation by taking the better of a direct greedy approach
and another greedy approach based on the maximum coverage problem. 
Note that a significantly better than $O(\sqrt[3]{m})$-approximation for
$2$-coverage would imply a better approximation for DS
than what is currently known.

\newcommand{\soT}{{\frak T}_{\rm sol}}
\newcommand{\inT}{{\frak T}_{\rm ins}}
\newcommand{\noT}{{\frak N}}
\newcommand{\bbP}{{\Bbb P}}
\newcommand{\bbQ}{{\Bbb Q}}
\newcommand{\bbT}{{\Bbb T}}

\section{Inapproximability Result for Triangle Packing}
\label{hastad}

The theorem below gives a 
$(76/75)-\eps\approx 1.0133$-inapproximability for TP.

\begin{Theorem}\label{TP-new}
Assume RP$\neq$NP. If $0<\eps<1/2$, there is no RP
algorithm that for each instance of TP with $228n$ nodes and 
a triangle packing of size at least $(76-\eps)n$ returns
a triangle packing of size at least $(75+\eps)n$.
\end{Theorem}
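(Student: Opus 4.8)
The plan is to give a gap-preserving (randomized) reduction from $3$-LIN-$2$, using H\aa stad's theorem as the source of hardness. Fix a small constant $\eps>0$ and start from a $3$-LIN-$2$ instance $\Phi$ with $m$ equations of the form $x_i+x_j+x_k=b$, $b\in\{0,1\}$, for which it is NP-hard to distinguish the case that at least $(1-\eps)m$ equations are simultaneously satisfiable from the case that at most $(\frac{1}{2}+\eps)m$ are. Before building the graph I would normalize $\Phi$ so that every variable occurs in the same number of equations, and I would choose the scaling parameter $n$ so that $m=2n$. The target is a graph $G$ on exactly $228n$ nodes, engineered so that an assignment satisfying all equations corresponds to an essentially perfect triangle packing (size $76n=228n/3$, a node partition into triangles), while each \emph{unsatisfied} equation forces the loss of exactly one triangle. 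Since $m=2n$, passing from ``all satisfiable'' to ``half satisfiable'' then moves the optimum from $\approx 76n$ to $\approx 75n$, which is the claimed gap.

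The construction uses two kinds of gadgets, mirroring the amplifier and equation figures (\texttt{triangle\_amp} and \texttt{triangle\_eq}). First, a \emph{variable/amplifier gadget} for each variable: a cyclic chain of triangles admitting exactly two maximum node-disjoint packings, labelled \emph{true} and \emph{false}, which use the \emph{same} number of triangles (so that the choice of value is ``free'' and only the equation gadgets register the difference), and which exposes one output interface per occurrence of the variable so that all occurrences transmit one common boolean value. Second, an \emph{equation gadget} for each equation, wired to the three relevant variable outputs and designed so that it can be packed with one \emph{extra} triangle exactly when the incoming values satisfy $x_i+x_j+x_k=b$, and is one triangle short otherwise; since true/false can be flipped on any single occurrence at the amplifier interface, it suffices to build the gadget for one target parity (say $b=1$) and obtain the other by negating an input. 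To force consistency inside each amplifier — i.e. to make partial or ``mixed'' internal packings strictly suboptimal — I would wire the chain using a randomly chosen expander-like structure whose soundness property holds only with high probability; \textbf{this randomized step is exactly what weakens the conclusion from P$\neq$NP to RP$\neq$NP}.

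The counting then runs in two parts. For \textbf{completeness}, given an assignment satisfying $(1-\eps)m$ equations, set every amplifier to the corresponding pure state and pack each satisfied equation gadget with its bonus triangle; together with the fixed pool of structural triangles (the amplifier triangles plus the base of every equation gadget), whose total is designed to be $74n$, this yields a packing of size at least $74n+(1-\eps)m=(76-2\eps)n$. For \textbf{soundness}, given \emph{any} triangle packing of $G$, I would first invoke a clean-up lemma to convert it into a canonical packing in which every amplifier sits in a pure true/false state, losing only a negligible number of triangles; this is the step justified by the random expander wiring, and it is what makes the bound hold with high probability. A canonical packing reads off an assignment, and its size is at most $74n$ plus the number of equations that assignment satisfies; if at most $(\frac{1}{2}+\eps)m$ equations are satisfiable, this is at most $74n+(\frac{1}{2}+\eps)m=(75+2\eps)n$. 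Composing a hypothetical RP algorithm for the TP gap problem with this randomized reduction would decide the (NP-hard) $3$-LIN-$2$ gap in RP, forcing RP$=$NP; rescaling $\eps$ gives the stated $228n$, $(76-\eps)n$, $(75+\eps)n$ thresholds and hence $\frac{76}{75}$-inapproximability.

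The main obstacle I anticipate is the exact design and verification of the equation gadget: it must realize a GF$(2)$ \emph{parity} constraint — one extra triangle iff an odd number of its three inputs is true — rather than the monotone (OR/AND, matching-type) constraints that triangle-packing gadgets encode most naturally, and it must do so while interfacing cleanly with the amplifier outputs and contributing a fixed node/triangle budget so that the global totals land precisely on $228n$, $76n$, $75n$. Getting this parity gadget right, together with the clean-up lemma that lets us assume a canonical packing (the second role of the random amplifier and the reason the gap is only ``roughly'' attained), is the technical heart; once the per-gadget node and triangle contributions are fixed, the completeness/soundness bookkeeping is routine arithmetic.
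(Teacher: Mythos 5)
Your high-level plan coincides with the paper's: both reduce from H\aa stad's $3$-LIN-$2$ gap, both use randomly wired Berman--Karpinski-style amplifiers to enforce consistency among the occurrences of each variable (this random wiring being exactly why the conclusion is stated under RP$\neq$NP rather than P$\neq$NP), both use equation gadgets that gain or lose one triangle according to whether the incoming values satisfy the parity constraint, and both land on the same numerology ($228n$ nodes, $76n$ versus $75n$ triangles, because the source instance has $2n$ equations). So the architecture is the right one.

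The genuine gap is that the two components you yourself flag as ``the technical heart'' are precisely where the paper's proof lives, and you construct neither. First, the parity gadget: the paper does not build one gadget and ``negate an input''; it rewrites every equation in the two normal forms $x+y+z=b$ and $\neg x+\neg y+\neg z=\neg b$ (balancing negated and unnegated literals and both right-hand-side parities), builds two \emph{different} gadgets --- one with six extra nodes for ``$=0$'' and one with four extra nodes for ``$=1$'' --- and, crucially, triplicates each equation and links the three copies by triangles through designated self-sufficient nodes. Only the combined triple gadget achieves ``lose exactly one triangle per unsatisfied equation'': a single copy of, say, the ``$=1$'' gadget cannot even be analyzed cleanly when exactly two literals are true (five nodes would have to be covered by disjoint triangles, which is impossible), and the deficit is absorbed only across the three linked copies. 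Your sketch has no mechanism playing the role of this triplication, so the per-equation loss is not controlled. Second, the clean-up step: the amplifier property has to be translated from the cut/independent-set setting into triangle packing (each contact node becomes a contact triangle, and $i$ minority contacts force at least $i$ uncovered nodes inside the consistency gadget), and normalization must be shown to \emph{not decrease} the packing size at all, by charging the at most $i$ nodes lost inside equation gadgets against the at least $i$ nodes gained inside the amplifier. Asserting a clean-up lemma that loses ``only a negligible number of triangles'' does not substitute for this charging argument, and your claimed structural-triangle pool of $74n$ cannot be verified without the explicit per-unit node counts ($18$ literal nodes, $30$ equation-gadget nodes, $180$ consistency-gadget nodes) that only the concrete gadget designs supply.
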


\begin{proof}
For convenience to readers, we first describe the plan of the proof, then
an informal overview of the calculations and finally the details of
each component of the proof.

\noindent
{\bf Plan of the proof}.
As stated before, the following result was obtained 
by H\aa stad in ~\cite{H97}.
Let $L$ be any language in NP. Then, an instance $x$ of $L$ 
can be translated in polynomial time to an instance of $3$-LIN-$2$ with $2n$ equations 
such that, for any constant $0<\eps<\frac{1}{2}$, the following holds:
\begin{itemize}
\item
if $x\in L$, then we can satisfy at least $(2-\eps)n$ equations, and, 

\item
if $x\not\in L$, then we can satisfy at most $(1+\eps)n$ equations.
\end{itemize}
The above result therefore provides an $(2-\eps)$-inapproximability 
of $3$-LIN-$2$ for any small constant $\eps>0$, assuming P$\neq$NP.

Our randomized schema to prove the desired inapproximability result
modulo RP$\neq$ NP is as follows.
Our randomized reduction uses the following polynomial-time transformations
that we will devise:
\begin{description}
\item[(A)]
First, we have a randomized ``instance transformation'' $\inT$ that
maps an instance $S$ of $3$-LIN-$2$
with $2n$ equations into a graph $G\equiv\inT(S)$ with $228nm_S$ nodes ($m_S<n$
is a small integer related to the size of $S$).  The algorithm of $\inT$ is
randomized and the output is random. 
{\bf A crucial property of this
transformation is that  
with probability {\em at least} $1/2$ the output is {\em correct}, \IE,
the corresponding instance graph $G$ will satisfy the subsequent requirement in (C) below}.

\item[(B)]
Second, we have a (deterministic) ``solution transformation'' $\soT$ that maps a solution, say $s$, of 
the instance $S$ of $3$-LIN-$2$ with $2n$ equations
to a solution $\soT(s,G)$ of the triangle packing problem in the above-mentioned graph $G$.
Our transformation will satisfy the following properties:
\begin{description}
\item[(a)]
If $s$ satisfies $2n-\ell$ equations of $S$ then
$\soT(s,G)$ has $(76n-\ell)m_S$ triangles in $G$ (and $3m_S\ell$ nodes not covered
by the triangles).
In particular, note that this implies that, 
\begin{itemize}
\item
if we satisfy $(2-\eps)n$ equations of $S$ then 
$\soT(s,G)$ has $(76-\eps)nm_S$ triangles in $G$, and, 

\item
if we satisfy $(1+\eps)n$ equations of $S$ then 
$\soT(s,G)$ has $(75+\eps)nm_S$ triangles in $G$.
\end{itemize}

\item[(b)]
We can find $s$ in polynomial-time if we are given $\soT(s,G)$.
\end{description}

\item[(C)]
Third, we have ``solution normalization'' transformation $\noT$ maps a 
triangle packing $\bbP$ in the graph 
$G$ into another triangle packing $\noT(\bbP,G)$ in the graph $G$ 
which is of the form $\soT(s,G)$ for some
solution $s$ of the instance $S$ of $3$-LIN-$2$.  
If $G$ is a ``correct output'' of $\inT(S)$ then $|\noT(\bbP,G)|\ge|\bbP|$, \IE,
normalization does not decrease the number of triangles in the solution.
\end{description}
Given the above transformation, the overall approach in our proof is 
as follows.
Suppose that we have a polynomial-time randomized algorithm $\frak A$ that with
probability at least $1/2$ finds triangle packing of size larger than
$(75+\eps)/(76-\eps)$ times the optimum (assuming that one exists).
Then, we can use $\frak A$ to devise an RP algorithm for any language in 
NP in the following manner: 
\begin{description}
\item[(a)]
We start with an instance $x$ of a language $L\in$NP. 
Using the proof of H\aa stad in ~\cite{H97}
we translate $x$ in polynomial time to the corresponding instance of 
$S$ $3$-LIN-$2$ with $2n$ equations. 

\item[(b)]
We compute $G=\inT(S)$.

\item[(c)]
We compute the triangle packing solution $\bbP={\frak A}(G)$.

\item[(d)] 
We compute a new triangle packing solution 
$\bbQ=\noT(\bbP)$
using the normalization transformation $\noT$.

\item[(e)]
if $|\bbQ|<|\bbP|$ then we repeat steps (b)-(d) up to a
polynomial number of times.

\item[(f)]
if $|\bbQ|<|\bbP|$ in some execution of Step~(e) 
then 
we find the solution $s$ of $S$ that corresponds to $\bbQ$.
If $s$ satisfies strictly more than $(1-\eps)n$ equations 
then we declare $x\in L$. In all other cases we declare $x\not\in L$.

One can now see that we are always correct if $x\not\in L$ and we
are correct with probability at least $1/2$ if $x\in L$.
\end{description}

\noindent
{\bf 
An informal overview of the calculations involved in instance transformation
$\inT$.}
The transformation $\inT$ from an instance $S$ of $3$-LIN-$2$ to 
an instance (graph) $G$ of triangle packing goes through the
following stages.
In $S$ we have a system of $2n$ equations modulo $2$, with $3$ literals per equation,
and we can satisfy either at most $(\frac{1}{2}+\eps)$ fraction of the equations 
or at least $(1-\eps)$ fraction of the equations. 

First, we replicate each equation some (polynomial) $m$ times.  This is to increase
the minimum number of occurrences of each variable such that 
the ``consistency gadgets'' for occurrences
will be correct -- the correctness of these gadgets is proved ``in the limit'',\IE,  
starting from a certain size.
This {\bf does not change} the fraction of equations in the system that can
be simultaneously satisfied, which is either $1-\eps$ or $\frac{1}{2}+\eps$.

Denote by $\neg x$ the negation of the variable or constant $x$ modulo $2$, \IE, 
$\neg x=x+1\pmod 2$. Then, any equation can have two ``normal'' forms, 
namely,
\[
x+y+z=b\pmod 2
\]
\[
\neg x+\neg y+\neg z=\neg b\pmod 2
\]
We now replace each equation with such a pair. Again, this does not change
the proportion of the equations that can be simultaneously satisfied.
Our reductions and instance/solution transformations will ensure that each
variable $\neg x$ receives a value which is the negation of the value
received by variable $x$.
The above replications together account for the constant $m_S$ 
mentioned in item~(A) of the plan of the proof.
In other words, after these replications, we have 
$nm_S$ variables.

Now, our system of equations have some nice properties:
\begin{itemize}
\item
roughly, for each two equations, both can be satisfied or one;

\item
same number of negated and non-negated literals;

\item
same number of equations ``$=0\pmod 2$'' and ``$=1\pmod 2$''

\item
assured minimum number of occurrences of each variable.
\end{itemize}
Now, we show our calculation on a normal pair of equations
as discussed in the replication method above.
\begin{itemize}
\item 
We have $6$ occurrences of literals.
We will design a ``triplicate gadget'' for each, in which each occurrence is
represented as $3$ nodes called {\em literal nodes}, thus we have a total of 
$18$ literal nodes.
We will design a single gadget for each ``$=0$'' equation that has $6$ other nodes, and a gadget 
for each ``$=1\pmod 2$'' equation that has 
$4$ other nodes. 
Thus, we have $10$ extra nodes for each normal pair of equations,
which makes $30$ extra nodes in a ``triplicate gadget''.

\item
For each $18$ literal node, we will have a part of a consistency gadget
in which we have $7$ triangles that make a sequence of overlaps.
Together, these triangles would have $21$ nodes,
but one of these node is the literal node, and of the other $20$,
each is shared with another triangle, so they are really $10$ distinct
nodes.
For a pair of triplicate gadgets, we have $10\times 18=180$ of the nodes
of consistency gadgets.
\item
Thus, together, we have $(180+30+18)nm_S=228nm_S$ nodes.

\item
Roughly, the two cases of triangle packing (ignoring the $\eps$ factors and
so forth) are as follows.
When both equations in the normal pair are satisfied, we cover them completely with $76$ triangles,
and when one equation fails, we will loose one triangle thereby covering with $75$ triangles.
\end{itemize}

\noindent
{\bf The outline of the instance translation.}
Given $S$, a system of $2n$
equations with $3$ variables per equations, we proceed as follows.

\begin{enumerate}
\item
\label{replia}
We replicate each equation six times, three times as a simple copy,
$x+y+z=b\mod~2$ and three times as
$\bar{x}+\bar{y}+\bar{z}=\bar{b}\mod~2$.  Having the same number of
literals $x$ as $\bar{x}$ helps in point \ref{gadgetc}, and having each
equation copied three times helps in point \ref{gadgete}.
\item
\label{replib}
We replicate the equations in $S$ $m$ times for a sufficiently 
(polynomially) large $m$ such each variable occurs sufficiently (polynomially)
many times.
The construction in point~\ref{gadgetc}
is faulty with probability $O(c^{m'})$ for some $c<1$ when $m'$ the number of
occurrences of a variable.
\item
For each literal (occurrence of a variable or its negation in an equation) we
create a separate
node.  From now on, {\em literal} will mean such a node.
\item
\label{gadgete}
We replace three copies of equation $e$ with an {\em equation gadget} $B_e$
that contains nine literals of $e$ (three, each in three copies) as well
as other nodes.
\item
\label{gadgetc}
For each variable $x$ we create {\em consistency gadget} $C_x$ that all 
the literals of $x$, as well as other nodes.
\end{enumerate}

\noindent{\bf Constructing consistency gadget $C_x$.}

The problem of triangle packing can be mapped into the independent
set problem in the following manner: starting from a graph $(V,E)$
we create a graph $(V',E')$, where $V'$ is the set of triangles in $E$,
and $\{t,t'\}\in E'$ if triangles $t$ and $t'$ share a node.

If graph $G'$ is cubic, i.e. each node has degree 3, we can have the reverse
transformation: from $(V',E')$ to $(V,E)$; $V=E'$, and $\{e,e'\}\in E$ if
$e$ and $e'$ are incident to the same node.  In this case, a node $u\in V$
with neighbors $v_i$, $i=0,1,2$, is transformed into nodes $\{u,v_i\}$,
$i=0,1,2\}$ and those three nodes for a triangle.  A pair of such triangles
is node-disjoint if the original nodes were not adjacent.

This point of view is not helpful in the construction of equation gadgets
because we obtained smaller gadgets than those that would correspond to
fragments of cubic graphs.  However, our consistency gadget are obtained
by such a transformation.

In particular, we will use a gadget, called an {\em amplifier}, introduced by
Berman and Karpinski \cite{BK99} 
in the context of maximum cut problem (see also
J. Chleb\'{i}kov\'{a} and M. Chleb\'{i}k \cite{CC06}).

Assume that we construct $G_x$ for a variable with $2k$ occurrences ($k$
simple, $k$ negated).  The respective amplifier can be defined as the
graph $(V^a,E^a)$ where
$V^a=\{u_0,\ldots,u_{14k-1}\}$,  This graph is bipartite, all edges are between
even nodes and odd nodes; we will refer to odd and even nodes as white and
black.  There are two classes of edges, the first forms a ring,
$\{u_i.u_{i+1~{\bf mod}~14k}\}$, the second forms a random matching between
white (even) and black (odd) nodes whose indices {\bf are not divisible by 7}.
Nodes with indices divisible by 7 are called
{\em contacts}, each of these nodes belongs also to an equation gadgets.

We wish a solutions -- a $U\subset V$ of nodes -- to be consistent within
consistency gadgets.  Equation gadgets ``see'' only the contacts.
Set $U$ is consistent within our gadget if either $U$ contains all black
contacts and none of the white ones, or vice versa.  If we have
an inconsistent solution, we replace it with the choice ``all white'' or
``all black'' that requires fewer changes of membership among the contacts.
Here is the key property (that holds with the probability that converges
to 1 as $k\rightarrow\infty)$):

\noindent
if $U\subset V^a$ contains $i\le k$ contacts of one color (the minority)
and at least as many nodes of another (the majority color), then at
least $i$ edges of $E^a$ do not belong to the cut of $U$.

The use of this property is that when we normalize a solution to coincide,
all contacts of $G_x$ should correspond to a single value assigned to $x$;
we can achieve it by altering the solution to coincide the color that contains
more contacts.  If the normalization changes the membership of $i$
contacts of $x$, we gain $i$ units of the objective function --- edges of the
cut --- within the gadget.
Presumably
the size of the cut decreases within equation gadgets. but the decrease
is bounded by $i$, the number of contacts that changed the membership.

Now we have to translate this usage of the amplifier to the independent set
problem.  In a bipartite cubic graph with $14k$ nodes, an independent set $S$
has cut $3|S|$, and if we have $3i$ edges not in the cut, then
$|S|=7k-i$.  Thus the same amplifier construction can be used for
independent set problem.

\noindent
if $U\subset V^a$ contains $i\le k$ contacts of its minority color, then at
least $i$ edges of $E^a$ are not covered by $U$.

Then we can translate the amplifier into a part of triangle packing
as shown in \FI{ampliftrans}, and the property can be rephrase by having $i$
nodes not covered by the solution triangle packing within an equation
gadget if $i$ contacts are covered in a minority manner (if the majority
of contacts covered by a solution is black, black is the majority color
and inconsistent consistent contact are black contacts that do not
belong, as well as white contacts that do belong.

\begin{figure}
\centerline{
\epsfig{file=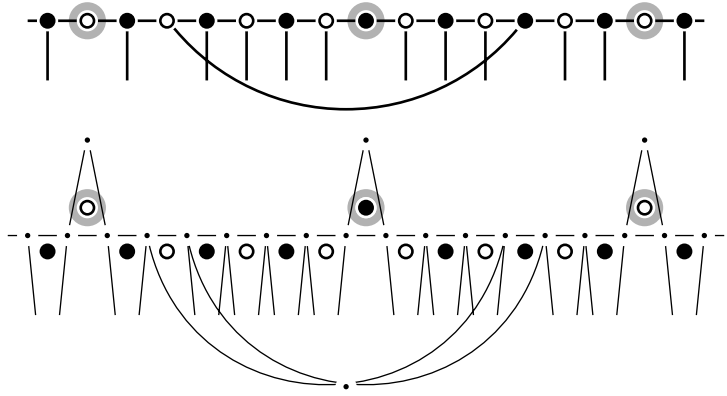}
}
\caption{\label{ampliftrans}
A fragment of an amplifier and its translation into a fragment
of our equation gadget.
Contacts are indicated by a gray ``halo''.
Note that after translation, each original contact node becomes
a contact triangle.  Each contact triangle contains a contact node
(in the diagram, on top).
If we choose white triangles, then contact nodes of the black triangles
are not covered within the gadget, and vice versa when we choose all black
triangles.
}
\end{figure}

\noindent{\bf How equation gadget $B_e$ works.}
Equations were replicated so they can be grouped into triples of identical
equations.  We create gadgets for equations and then, for each group of three,
we connect identical gadgets by
providing triangles that cover one node in each of them.

For such a group of copies of equation $e$, let $B^i_e$, $i=0,1,2$,
be an individual gadget and $B_e$ the combined one.

Thus we can describe a triple gadget by describing an individual gadget,
$B^i_e=(V^i_e,E^i_e)$ and specifying set $S^i_e$ of nodes that are
connected to their copies in other individual gadget.  From the point
of view of an individual gadget, nodes in $S^i_e$ can be covered
separately.

Assume that $e\equiv x'+y'+z'=b~\mod~2$ where $x'$ is a literal of $x$ ($x$
or $\bar{x})$.  An individual gadget contains these three literals.

The property of an individual gadget $B^i_e$ is that $V^i_e$ can have all nodes
covered by a triangle packing and $S^i_e$ if only if the literals are covered
consistently with values that make $e$ satisfied.  For example, if 
$e\equiv x+y+z=0~\mod~2$, and none (or exactly two) of the three literals
contained in $V^i_e$ is covered by triangles contained in $C_x\cup C_y\cup C_z$.
The property of the combined gadget is that if the literals are covered
consistently (e.g. either all $x'$ are covered by triangles contained in $C_x$
or none), then either they are covered consistently with values that satisfy
$e$ and we can cover entire $V_e=V^0_e\cup V^1_e\cup V^2_e$, or the literals
are covered consistently with values that do not satisfy $e$ and we can cover
$V_e$ except for three nodes (one exception in each $V^i_e$).

\noindent{\bf Properties of gadgets imply correct normalization.}
So far, we described $\bbQ=\noT(\bbP)$ only partially, namely how to select
triangles contained in consistency gadget $G_x$ (white or black, corresponding
to assigning 0 or 1 to $x$).  If the normalization
change the way $i$ contacts are covered, then within $G_x$ we cover
all nodes with the triangle, while before we did not cover $i$ of them.
Thus we can pass to each
``minority case'' a permission not to cover one node.

Now consider a combined equation gadget.
If the majority cases satisfy the equation, after the normalization we
cover all nodes of the equation gadget.  Otherwise each individual gadget
either contained a minority case literal and will receive a permission
not to cover a node, or it had all majority cases and thus at least
one uncovered node.  Thus to maintain the number of covered nodes
it suffices to cover the nodes in the gadget with three exceptions.

\begin{wrapfigure}{r}{4in}
\centerline{\begin{picture}(300,110)(0,-10)
\put(10,10){\epsfig{file=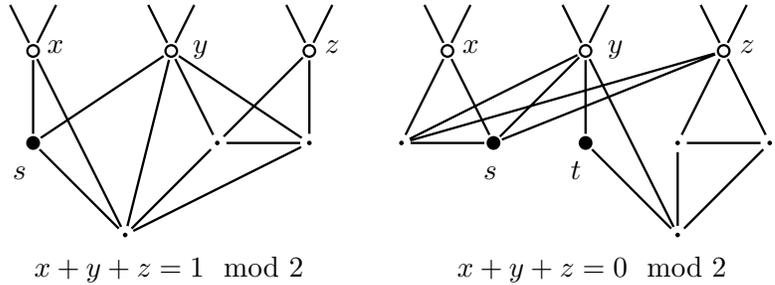,width=4in}}
\put(20,-5){$x+y+z=1\mod 2$}
\put(25,80){$x$}
\put(80,80){$y$}
\put(130,80){$z$}
\put(12,32){$s$}
\put(180,-5){$x+y+z=0\mod 2$}
\put(182,80){$x$}
\put(237,80){$y$}
\put(287,80){$z$}
\put(190,32){$s$}
\put(223,32){$t$}
\end{picture}}
\caption{\label{triang} Equation gadgets, used in three copies.  Thick dots are
nodes connected with other copies (self-sufficient), empty circles are
literals, nodes shared with consistency gadgets of variables.}
\end{wrapfigure}

\noindent
{\bf Construction of $B^i_e$}

Consider equation $e\equiv x+y+z=0 \mod~2$.  Node set $V^i_e$
consists of three literals (one copy of $x,y,z$), two self-sufficient
nodes $S^i_e=\{s^i,t^i\}$ and four other nodes.

If $x,y,z$ are false, this is coded by a solution in which
none is already covered
by triangles from their consistency gadgets,
we cover the nine nodes of $B^i_e$ with three
triangles.  If exactly two are already covered, we
cover the uncovered literal, $s^i$ and ``four other nodes'' with two triangles.

If exactly one of the literals true (already covered), we would have to cover
eight
nodes.  This could be done only with two triangles and two self-sufficient
nodes; however the triangles disjoint with $S^i_e$ all overlap,
so the best we can do is to use one such triangle, one triangle that
contains $s^i$ and $t^i$, leaving one non-self-sufficient node uncovered.

If three literals are true, we would have to cover six nodes, this
could be done only with two triangles, but there is only one triangle
that does not contain literals, so the best we can do is to use this
triangle, as well as $S^i_e$,
leaving one of the ``other nodes'' uncovered.

Now consider equation $e\equiv x+y+z=1 \mod~2$.  Sub-gadget $B^i_e$
contains $x^i,y^i,z^i$, self-sufficient node $s^i$ and three other nodes.

If exactly one of the literals is true, we have to cover
six nodes, which we
can do with two triangles.  If all literals are true, we have to cover
4 nodes, which we do using a triangle that is disjoint with $s^i$, as
well as $s^i$.

If no literal is true, we would have to cover 7 nodes, this could be
done only with two triangles and $s^i$, but all triangles that do not
contain $s^i$ overlap.  If two literals are true, we would have to
cover 5 nodes, impossible.  But if we pretend that one more literal
is covered we can cover all other nodes, so when the equation is false
we leave one non-self-sufficient node uncovered.

\noindent{\bf The property of the combined gadget}
It is easy to see that when the literals are consistent we can
cover each individual gadget in the same way, so when any nodes remain
uncovered they form triples of corresponding self-sufficient nodes
and thus they are covered by the triangles that connected individual gadgets.
\vspace{-4ex}

\end{proof}

\section{Approximability for \All$_{n,\ell}$ and \TAll$_{n,\ell}$ for $a=3$} 

\begin{Theorem}\label{basic-approx}
Both \All$_{n,\ell}$ and \TAll$_{n,\ell}$ 
are $((153/152)-\eps)$-inapproximable 
even if $a=3$ assuming RP$\neq$NP and 
(for any $\ell$) admit
$((7/6)+\varepsilon)$-approximation for
any constant $\varepsilon>0$.
\end{Theorem}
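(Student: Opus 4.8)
The plan is to prove the two bounds separately, both resting on a single structural identity for $a=3$. \textbf{Reduction to triple packing.} Since any one or two elements trivially satisfy the $k$-allele condition for $k\in\{2,4\}$, in $3$-\All$_{n,\ell}$ and $3$-\TAll$_{n,\ell}$ the available covering sets are exactly all singletons, all pairs, and the \emph{valid} triples. Discarding redundant elements turns any optimal cover into a partition of $\cU$ (a subset of a valid set is valid), so writing $N=|\cU|$ and letting $P^\ast$ be the maximum number of pairwise disjoint valid triples, a partition into $t$ triples, $p$ pairs and $s$ singletons uses $t+p+s=N-(2t+p)$ sets. Because pairs are always available, for fixed $t$ the quantity $2t+p$ is maximized by $p=\lfloor (N-3t)/2\rfloor$, which is increasing in $t$; hence the optimum is attained at $t=P^\ast$ and
\[
\mathrm{OPT}=\big\lceil \tfrac{1}{2}(N-P^\ast)\big\rceil .
\]
Thus for $a=3$ both allele problems are, via this exact formula, equivalent to \emph{maximum $3$-set packing} on the hypergraph of valid triples, and I will use this in both directions.

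\textbf{Lower bound ($153/152$).} I would reduce from triangle packing (Theorem~\ref{TP-new}). Given a TP instance $G$ on $N=228n$ nodes, I annotate each node with loci so that the valid triples coincide \emph{exactly} with the triangles of $G$, simultaneously for the $2$- and $4$-allele conditions, while every pair stays valid. Use one locus per \emph{non-edge} $\{u,v\}$: there set ${\bf u}=\{1,2\}$, ${\bf v}=\{3,4\}$, and $\{0\}$ for every other element. A pair then sees at most four alleles and admits the reordering $(1,3),(2,4)$, so it remains valid for both conditions; a triple containing both $u$ and $v$ sees $\{0,1,2,3,4\}$ and forces $\ge 3$ distinct symbols in each coordinate, so it is killed for both; a triple meeting $\{u,v\}$ in at most one element sees at most three alleles and survives. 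Hence a triple is valid at all loci iff it contains no non-edge, i.e.\ iff it is a triangle (one locus per non-adjacent pair gives $\ell=O(N^2)$, within $O(n^3)$). Therefore $P^\ast=T$, the maximum triangle packing of $G$, and $\mathrm{OPT}=\lceil\tfrac12(N-T)\rceil$. Feeding in the gap of Theorem~\ref{TP-new}: a packing of $(76-\eps)n$ yields cover $\le(76+\tfrac\eps2)n$, whereas maximum packing $(75+\eps)n$ forces cover $\ge(76.5-\tfrac\eps2)n$; the ratio tends to $\tfrac{76.5}{76}=\tfrac{153}{152}$, giving $((153/152)-\eps)$-inapproximability under RP$\neq$NP for both problems.

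\textbf{Upper bound ($7/6+\eps$).} For arbitrary $\ell$ I would form the $3$-uniform hypergraph of valid triples (validity of one triple is checkable in $\mathrm{poly}(\ell)$ time, trying the $8$ reorderings per locus for the $2$-allele case), run the Hurkens--Schrijver local-search algorithm~\cite{HS89} for maximum $3$-set packing to get a packing $P$ with $P\ge\tfrac{1}{3/2+\eps}P^\ast$, and then pair up the leftover elements. The cover then has size $\lceil\tfrac12(N-P)\rceil$, so by the identity its ratio is $\tfrac{N-P}{N-P^\ast}$. Since $P\ge(\tfrac23-\eps')P^\ast$ and $P^\ast\le N/3$, the map $P^\ast\mapsto\tfrac{N-\frac23 P^\ast}{N-P^\ast}$ is increasing, so the worst case is a perfect triple packing $P^\ast=N/3$: then $\mathrm{OPT}=N/3$ while $P\ge 2N/9$ gives cover $7N/18$, a ratio of exactly $7/6$. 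Absorbing the local-search $\eps$ yields $(7/6+\eps)$ overall, for any constant $\eps>0$ and any $\ell$.

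\textbf{Main obstacle.} The delicate step is the loci gadget: I must guarantee that validity coincides with ``triangle'' \emph{for both} the $2$- and $4$-allele conditions at once, that no spurious valid triple survives across the full locus sequence, and that all pairs and singletons remain valid (so the pairing-up step in both directions is legitimate). Verifying the exact cover identity $\mathrm{OPT}=\lceil\tfrac12(N-P^\ast)\rceil$—in particular that an optimum may be taken as a partition and that maximizing $2t+p$ reduces to maximizing $t$—is the second point requiring care, since it is what converts the triangle-packing gap and the Hurkens--Schrijver packing guarantee into the stated covering constants $153/152$ and $7/6+\eps$.
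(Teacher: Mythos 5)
Your proposal is correct and follows the paper's high-level plan---reduce from the triangle-packing gap of Theorem~\ref{TP-new} by annotating nodes with loci so that valid triples coincide with triangles, pass between covers and triple packings, and use Hurkens--Schrijver plus greedy pairing for the $(7/6)+\eps$ upper bound---but the key gadget is genuinely different. The paper introduces an intermediate ``$2$-label cover'' problem and defines one locus per choice of an origin node $a$ and an optional edge $e$, labelling each node by its distance from $a$ when $e$ is shrunk to length $0$; establishing that every non-triangle gets three labels at some such locus requires a shortest-path common-prefix argument, and the construction uses $O(|V|\cdot|E|)$ loci. You instead use one locus per \emph{non-edge} $\{u,v\}$ with allele pairs $(1,2)$, $(3,4)$ and $(0,0)$ elsewhere, which by direct inspection kills exactly the triples containing both $u$ and $v$ under both the $2$- and $4$-allele conditions while keeping every pair valid; this needs only $O(n^2)$ loci, avoids the distance argument entirely, and is close in spirit to the forbidden-triplet gadget the paper itself uses in Lemma~\ref{high-inapprox}. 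Your explicit derivation of $\mathrm{OPT}=\lceil(N-P^\ast)/2\rceil$ (reduce a cover to a partition, maximize $2t+p$) and the monotonicity argument pinning the worst case of the upper bound at a perfect triple packing are also more careful than the paper's one-line versions, and the numerical gap $(153/152)-\eps$ comes out identically. One caveat you share with the paper: neither construction excludes valid sets of size $4$ when the TP instance contains a $K_4$ (a $4$-clique passes both allele conditions at every locus in either gadget), and such sets would undermine the lower bound $\mathrm{OPT}\ge\lceil(N-P^\ast)/2\rceil$ on the few-triangles side as well as the claim ``$a=3$''; strictly one should either verify that the hard TP instances are $K_4$-free or add loci killing $4$-cliques, but this is a gap in the paper's argument as much as in yours.
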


\begin{proof}
We reduce the {\em Triangle Packing} (TP) problem to our problem.
We will use the inapproximability result for TP as described
in Section~\ref{hastad}. 

To treat both \All$_{n,\ell}$ and \TAll$_{n,\ell}$ in an unified
framework in our reduction, it is convenient to introduce the 
$2$-label cover problem. 
The inputs are the same as in \All$_{n,\ell}$ or \TAll$_{n,\ell}$
except that each locus has just one value (label) and 
a set of individuals are full siblings if on every locus they have 
at most $2$ values. 
Thus, each individual can be thought of as an ordered sequence 
of labels.
An instance of the $2$-label cover problem can be translated 
to an instance of our problem by replacing each label in each 
locus in the following manner:
\begin{itemize}
\item
for \All$_{n,\ell}$, the label  
value $v$ is replaced by the pair 
$(v,v')$ where $v'$ is a new symbol; 

\item
for \TAll$_{n,\ell}$ the value $v$ is replaced by the pair $(v,v)$.
\end{itemize}
We will reduce an instance of 
TP to the $2$-label cover problem by introducing an 
individual for every node of the graph $G$ with $n$ nodes and 
providing label sequences for each node (individual) such
that:
\begin{description}
\item[$(\star)$] 
three individuals corresponding to a triangle of $G$ have
at most two values on every locus, and 

\item[$(\star\star)$] 
three individuals that do not correspond to a triangle of $G$ have
three values on some locus. 
\end{description}
Note that, since any pair of individuals can be full siblings, 
the above properties imply that TP has a solution with $t$
triangles if and only if the $2$-label cover can be covered with
$\frac{n-t}{2}$ sibling groups.
Thus, Theorem~\ref{TP-new} implies that it is NP-hard to decide
on instances of $228k$ individuals 
whether the number of full sibling groups
is above $(228-76+\eps)k/2$ or below $(228-75-\eps)k/2$, thereby giving
$(153/152)-\eps\approx (1.0064-\eps)$-inapproximability.

The index of a locus, which we call the coordinate, 
is defined by:
\begin{description}
\item[(a)] 
an ``origin'' node $a$, and 
\item[(b)] 
{\em optionally}, a certain edge $e$.
\end{description}
Thus, we will have at most $O(|V|\cdot |E|)$ loci. 
The respective label of a node $v$ at this coordinate 
is the distance from $a$ to $v$, assuming
every edge except $e$ has length $1$ while $e$ has length $0$. 
Let dist$(u,v)$ denote the distance between nodes $u$ and $v$. 

It is easy to see that Property~$(\mathbf{\star})$ holds.
Consider a triangle $\{u,v,w\}$ and assume that $u$ has the minimum label
value of $L$, \IE, it is the nearest with respect to the origin node that defined 
this locus. Then labels of $v$ and $w$ are at least
$L$ and at most $L+1$, hence we have at most two labels.

It is a bit more involved to verify Property~$(\mathbf{\star\star})$.
Consider a non-triangle $\{u,v,w\}$ in a labeling defined by
$u$ (with no edge). $u$ has label $0$ and $v,w$ have positive labels which
may be equal: if not, we are done; if yes, let $L=$dist$(u,v)=$dist$(u,w)$. 

Consider the two shortest paths from $u$ to $v$ and $w$,
respectively, such
that they share a maximally long initial part; so for some node $x$
dist$(u,v)=$dist$(u,x)+$dist$(x,v)$, \\ 
dist$(u,w)=$dist$(u,x)+$dist$(x,w)$ 
and the shortest paths from $x$ to $v$ and $w$ have to be disjoint.
Let $\{x,y\}$ be an edge on a shortest path from $x$ to $v$ and now set
its length to $0$.

First, observe that 
dist$(y,w)\geq$dist$(x,w)$, since otherwise 
dist$(y,w)\leq$dist$(x,w)-1$, 
dist(u,v) = dist(u,x)+dist(x,y)+dist(y,v) and also
dist$(u,w)=$dist$(u,x)+$dist$(x,y)+$dist$(y,w)$ and we found a longer
common prefix of shortest paths from $u$ to $v$ and $w$.

Now when we shrink $e=\{x,y\}$ by setting its length to zero, 
the labels of $u$ and $w$ are unchanged and
the label of $v$ drops by $1$; we have only two labels only if
the labels of $u$, $v$ and $w$ are $0$, $1$ and $1$, respectively, 
which implies that $\{u,v\}$ and $\{u,w\}$ are edges.

In this case we label nodes by distances from $v$; $v$ gets $0$, $u$ gets
$1$, if $w$ also gets $1$ then we have edges $\{u,v\}$, $\{u,w\}$ and now
we witnessed $\{v,w\}$, hence $\{u,v,w\}$ is a triangle.

This completes the hardness reduction.

On the algorithmic side, 
suppose that an optimal solution for either version of the
sibling problem on $n$ individuals involve $a$ triples and $b$ pairs 
of individuals (and, thus, $3a+2b$). 
Hurkens and Schrijver~\cite{HS89} 
have a schema that approximates triangle packing within a ratio of
$1.5+\eps$ for any constant $\eps>0$.
We can use this algorithm to get at least $(2a/3)-\eps$ triples.
We can cover the remaining 
$n-(2a-3\eps)=a+2b+3\eps$ 
elements by pairs.
Thus, we use at most
$(2a/3)-\eps+(a/2)+b+(3/2)\eps=(7a/6)+b+(\eps/2)$
which is within a factor of $(7/6)+\eps$ of 
$a+b$. 
\end{proof}

\section{Approximability of \All$_{n,\ell}$ and \TAll$_{n,\ell}$ for $a=4$} 

\begin{Theorem}\label{main2}
For $a=4$, 
both \All$_{n,\ell}$ and \TAll$_{n,\ell}$ 
are $((6725/6724)-\eps)$-inapproximable 
even if $\ell=2$ 
assuming RP$\neq$NP and 
(for any $\ell$) admit
$((3/2)+\varepsilon)$-approximation 
for any constant $\varepsilon>0$.
\end{Theorem}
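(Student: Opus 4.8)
The plan is to treat the two assertions separately, reusing the \emph{$2$-label cover} device from the proof of Theorem~\ref{basic-approx}. Recall that in that formulation each individual carries a single label per locus, and a group of at most $a$ individuals is admissible exactly when each locus shows at most two distinct labels; under the substitutions $v\mapsto(v,v')$ (for \All) and $v\mapsto(v,v)$ (for \TAll) this one condition encodes both the $4$-allele and $2$-allele conditions simultaneously, so a single construction serves for both problems. With $\ell=2$ an individual is then literally a point $(x,y)$ in the plane, an admissible quadruple is a set of four points occupying the cells of some grid $\{x_1,x_2\}\times\{y_1,y_2\}$, and any pair is always admissible. Thus the covering problem becomes: cover a planar point set by the fewest ``$2\times2$'' groups of size at most four.

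For the lower bound I would reduce from the gap instance of $3$-MAX-CUT described in Section~\ref{3maxcut-proven}: a $3$-regular $G$ with $|V|=336n$, $|E|=504n$, for which it is RP-hard to tell $score(V')\ge(332+\eps)n$ from $score(V')\le(331-\eps)n$. First I would attach to each vertex a small cluster of points whose two admissible internal alignments --- ``horizontal'' versus ``vertical'' --- encode the two sides of the cut, and to each edge a cluster whose points can be absorbed into full size-four groups together with the clusters of its two endpoints \emph{precisely} when the endpoints receive opposite alignments, i.e.\ when the edge is cut. The whole point of the ``geometric'' encoding is to make this merging a statement about shared $x$- and $y$-coordinates. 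I would calibrate the gadgets so that the base cost is $7056n$ groups and every cut edge saves exactly one group. Then $score(V')\ge(332+\eps)n$ yields a cover of size $\le(6724-\eps')n$, while $score(V')\le(331-\eps)n$ forces a cover of size $\ge(6725-\eps')n$; distinguishing these would beat the ratio $6725/6724$, giving the claimed $\big((6725/6724)-\eps\big)$-inapproximability. The constant $7056=504\cdot 14$ (equivalently $336\cdot21$) is exactly what makes the two survivor counts $6724n$ and $6725n$ come out.

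The hard part will be the geometric design itself. With only two loci there is very little freedom to ``separate'' points, so the crux --- the planar analogue of property $(\star\star)$ in Theorem~\ref{basic-approx} --- is to guarantee that the \emph{only} admissible size-four groups are the intended ones; otherwise an adversary could form quadruples straddling unrelated gadgets and defeat the cut-based lower bound. I would control this by assigning the $x$- and $y$-coordinates of distinct gadgets to widely separated numeric ranges, so that four points drawn from different gadgets can never simultaneously collapse to two $x$-values and two $y$-values, and then verify by a case analysis (as in the distance argument for $a=3$) that every admissible quadruple lives inside a single edge-with-endpoints gadget and corresponds to a cut edge. Forcing the saving to be exactly one unit per cut edge, against the $3$-regularity and the fixed base cost, is the calibration that pins down the final constants.

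For the upper bound I would reduce min-cover to \emph{cardinality} $4$-set packing: the admissible quadruples among the $n$ individuals number $O(n^4)$ and each is checked in $O(\ell)$ time, so the packing instance is explicit. Running the local-search algorithm of Berman and Krysta~\cite{BerK03} (equivalently Hurkens and Schrijver~\cite{HS89}), which packs $4$-sets within a factor $2+\eps$, yields $t\ge x_4/(2+\eps)$ disjoint admissible quadruples, where an optimum partition uses $x_4$ quadruples, $x_3$ triples and $x_2$ pairs (singletons may be paired up, costing an additive constant). Covering the $n-4t$ leftover individuals by pairs and using $n/2=2x_4+\tfrac32 x_3+x_2$, the total group count is at most $\big(2-\tfrac1{2+\eps}\big)x_4+\tfrac32 x_3+x_2+O(1)\le(\tfrac32+\eps')(x_4+x_3+x_2)$, i.e.\ within $\tfrac32+\eps$ of the optimum. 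This is exactly the $a=3$ argument of Theorem~\ref{basic-approx} with $4$-set packing in place of triangle packing, and it already works with quadruples and pairs alone.
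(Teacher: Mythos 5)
The second half of your proposal (the $\bigl(\tfrac32+\eps\bigr)$-approximation) is fine and is essentially the paper's argument: pack admissible $4$-sets at ratio $2+\eps$ via Hurkens--Schrijver/$2$-IMP, pair up the leftovers, and the accounting $n/2-t\le\bigl(2-\tfrac{1}{2+\eps}\bigr)x_4+\tfrac32x_3+x_2+\tfrac12x_1$ gives the claimed ratio. The problem is the lower bound, which is where the entire proof lives: you state the target constants ($14m$ base cost, one group saved per cut edge, hence $6725/6724$) and the intent of a vertex gadget with two ``alignments,'' but you give no construction, and you yourself flag the gadget design as ``the hard part.'' That is not a deferred detail --- it is the theorem. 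The paper's proof spends several pages building node gadgets of $36$ individuals arranged as a $3\times12$ toroidal grid in a $4$-regular graph, proving a $13m$ lower bound via a potential function on individuals, and running a delicate normalization argument showing any cover can be converted to a cut-derived one without increasing its size; none of this is supplied or replaced in your proposal.

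More importantly, your chosen formulation may not be able to support such a construction at all. The paper does \emph{not} route the $a=4$ case through the $2$-label cover: its individuals carry genuine allele pairs $\{u,v\}$ with $u\ne v$ at the first locus, so individuals are \emph{edges} of a $4$-regular graph and the admissible large groups are $3$-edge paths, $4$-cycles, and the special $\{u,u\},\{u,v\},\{v,v\}$ triples. In your $2$-label-cover picture with $\ell=2$, the only admissible groups of size $3$ or $4$ are subsets of combinatorial $2\times2$ grids --- a strictly poorer vocabulary --- and it is not established that vertex gadgets with exactly two internal alignments, unit savings per cut edge, and no parasitic cross-gadget groups exist in that world. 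Your proposed fix of ``widely separated numeric ranges'' does not help: labels are categorical, so only equality matters; disjoint label sets between gadgets do restrict cross-gadget groups, but edge gadgets must share coordinates with their endpoints' vertex gadgets, so full disjointness is unavailable exactly where the danger lies. To repair the proof you would either have to actually exhibit and analyze grid gadgets realizing the stated constants (including the analogue of the paper's potential-function lower bound and normalization step), or abandon the $2$-label-cover shortcut and work, as the paper does, with two-allele loci directly.
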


\begin{proof}
We will prove the result for 
\TAll$_{n,\ell}$ only; a proof for
\All$_{n,\ell}$ can be obtained by an easy modification 
of the above proof.
We will prove the result by showing that, for any constant $\eps>0$,
\TAll$_{n,\ell}$ cannot be approximated to within a ratio of 
$\frac{6725}{6724}-\eps$ unless RP$=$NP. 

We will reduce an
instance $G=(V,E)$ of $3$-MAX-CUT
to \TAll$_{n,\ell}$ and use the previously proved result on
$3$-MAX-CUT as stated in Section~\ref{3maxcut-proven}. 
For notational simplicity, let $m=|E|$. 
We will provide a reduction from an instance $G=(V,E)$ of $3$-MAX-CUT 
with $336n$ vertices 
to an instance of \All$_{10m,\ell}$ with $\ell=2$. 
The reduction will satisfy the following properties:
\begin{description} 
\item[(i)]
a solution of $3$-MAX-CUT with a score of $x$ 
correspond to 
a solution of 
\TAll$_{24m,2}$ 
with 
$14m-x$ sibling groups;

\item[(ii)]
a solution of 
\TAll$_{24m,2}$ 
with $z$ sibling groups 
can be transformed in polynomial time 
to another solution of 
\TAll$_{24m,2}$ 
with $14m-y\leq z$ sibling groups (for some positive integer $y$)
such that this solution correspond to a solution of $3$-MAX-CUT 
with a score of $y$.
\end{description} 
Note that this provides the required gap in approximability. Indeed,
observe that (with $m=336\times\frac{3}{2}\times n=504n$) 
$3$-MAX-CUT has a solution of score below 
$(331-\eps)n$ if and only if 
\TAll$_{24m,2}$ 
has a solution with at least 
$14\times 504n-(331-\eps)n=(6725+\eps)n$ 
sibling groups
and conversely 
$3$-MAX-CUT has a solution of score above 
$(332+\eps)n$ if and only if 
\TAll$_{24m,2}$ 
has a solution with at most 
$14\times 504n-(332+\eps)n=(6724-\eps)n$ 
sibling groups;
thereby the inapproximability gap is 
$\frac{6725}{6724}-\eps$. 

When we look at {\em one locus only}, a set of full siblings can have a very
limited set of values for alleles. Consider first the case in which every individual
has two different elements (alleles) at this locus. We can then view each individual
$\{u,v\}$ 
as an edge in an undirected graph with the two elements $u$ and $v$ representing 
two nodes in the graph. Three edges (individuals) can be full siblings if they
form a path or a cycle; if they do not form a connected graph their
union has more than $4$ elements, and if they are of the form
$\{u,v\},\{u,w\},\{u,x\}$ then also they violate the $2$-allele condition.
Four edges can be full siblings if they form a cycle since they must have only
$4$ nodes and $3$ edges incident on the same node violate the $2$-allele
condition.
The other members in a full sibling group 
for an individual $\{u,u\}$ can be subsets of
either 
$\{\,\{u,v\},\{v,v\}\,\}$ or
$\{\,\{u,v\},\{u,w\},\{v,w\}\,\}$.
In our reduction cycles of length $3$ will not exist, so full siblings sets
of size larger than two will be
paths of $3$ edges, cycles of $4$ edges and triples of the
form $\{u,u\},\{u,v\},\{v,v\}$.
For the purpose of the reduction, it would be more convenient to
reformulate the properties~{\bf (i)} and {\bf (ii)} of the reduction 
described above by the following obviously equivalent properties:
\begin{description} 
\item[(i')]
a solution of $3$-MAX-CUT with a score of $m-x$ 
correspond to 
a solution of 
\TAll$_{24m,2}$ 
with 
$13m+x$ sibling groups;

\item[(ii')]
a solution of 
\TAll$_{24m,2}$ 
with $z$ sibling groups 
can be transformed in polynomial time 
to another solution of 
\TAll$_{24m,2}$ 
with $13m+y\leq z$ sibling groups (for some positive integer $y$)
such that this solution correspond to a solution of $3$-MAX-CUT 
with a score of $m-y$.
\end{description} 
We now describe our reduction. 
We are given a cubic graph $G$
with $2n$ nodes (and thus with $m=3n$ edges) and we will construct
an instance $J$ of 
\TAll$_{24m,2}$. 
We replace each node $u$ of $G$ with a gadget $G_u$
that consists of
$36$ individuals (see Figure~\ref{node-edge-gadget}). 
Our individuals have two loci. According to
the first locus, individuals are edges in a 4-regular graph.
Gadget $G_u$ is a $3\times 12$ grid.  The
rows are closed to form rings of 12 edges, and every fourth column
is similarly closed to form a ring on 3 edges.  This leaves
$6$ connected groups of $3$ nodes each with $3$ neighbors 
only (\EG, the second, third and fourth node from left on the first
row is one such group); these
groups are connected to similar groups in other gadgets.
A connection between two gadgets consists of two $2\times 3$ grids;
for each grid the two rows come from two above-mentioned
groups of nodes, one from each gadget. 

\begin{figure}[h]
\centerline{\epsfig{file=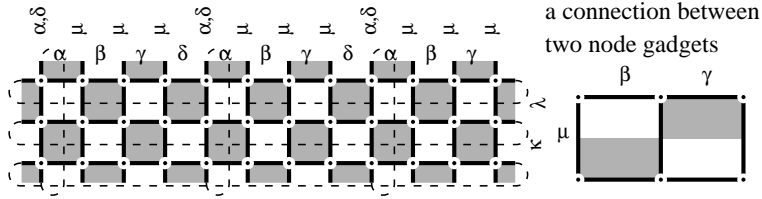,width=4in}}
\caption{\label{node-edge-gadget}
Node gadget $G_u$ for a node $u$ (left) and connections between two
node gadgets (right)
used in the proof of Theorem~\ref{main2}.
The dashed lines indicate wrap-around connections between boundary nodes
of the node gadget. The edge labels indicate the values (alleles)
in the second locus of each edge (individual). 
The wrap-around horizontal edges have label $\delta$. 
}
\end{figure}

We can view the second locus as labels on edges.
A one-letter label $a$ corresponds to a ``pair with a repeat'', \IE, $(a,a)$, and two-letter
label $a,b$ is a ``normal pair'' $(a,b)$.
Inside the $3\times 12$ grid of a node gadget the labels of
horizontal edges are equal if one edge is above another, and
in a $12$-edge ring of such edges labels repeat in a cycle of
$4$ (and each has one letter).
We have similar situation for vertical edges inside the grid.
The ``wrap-around'' edges (in every $4\tx$ column) are labeled
with proper pairs $\alpha,\delta$ such 
that they intersect the labels of their neighbors.
We assume that these labels are unique to every $G_u$ 
(in Figure~\ref{node-edge-gadget}, these would be labels $\delta_u$ and
$\alpha_u$).

The edges that connect node gadgets are labeled $\mu$ where
$\mu$ is the same in all node gadgets and the labels of
gadget edges that take part in the connection are the same in all gadgets
(thus $\beta$ and $\gamma$ are without implicit subscripts).

It is easy to see that every cycle of $4$ edges in our new graph
is indeed a full siblings set: according to the first locus they
are surely so 
and according to the second locus we can have only two distinct labels
on a cycle, \EG, $\{\alpha_u,\lambda\}$ or $\{\beta,\mu\}$. 
Edges with a ``normal pair'' 
label $\alpha,\delta$ do not belong to any
cycle of length $4$.

It is a bit more non-trivial to check that we have only
two types of full sibling sets of $3$ edges: subsets of
$4$-cycles, and sets with repeat label $\alpha$, repeat label $\delta$ and
normal label $\alpha,\delta$ that include ``wrap-around'' edges and
adjacent horizontal edges (one at each end).
Basically, if we have two horizontal edges from ``different
columns'' in a set, we cannot add any other label --- with
the exception we have just described.
Recall that a full sibling set of $3$ edges forms a path;
thus combination of labels like $\lambda$, $\delta$ and $\kappa$
is not full siblings.

We give each edge a {\em potential}. By default it is
equal to $0.25$.  The exceptions are: an edge with the label
$\alpha,\delta$ has a potential of $0.5$, an edge
with label $\mu$ 
that is not a center of a group of three
nodes in the node gadget that defined an edge connection
has a potential of $0.5$ and an edge with label $\mu$
that is a center of a group of three
nodes in the node gadget that defined an edge connection
has a potential of $0$.

By previous observations, no full siblings set has a potential exceeding
$1$. Note also that for each node of $G$ we distributed
a potential of $19.5$, so no cover with full siblings sets can use
fewer than $19.5\times 2n=39n=13m$ sets.

Assume that in $G$ we have a cut with $3n-c=m-c$ edges, \IE, a
partition of the set of nodes into $A$ and $B$ such that only
$c$ edges (of $m=3n$ edges) are inside the partitions.  We will show
a cover with $39n+c$ full siblings sets.  First we use cycles that
correspond to gray squares in every gadget $G_u$ such that
$u\in A$, and if $u\in B$ we use cycles that correspond to white
square.  This is $12$ sets per gadgets.  Next, in each gadget
we use $3$ triples centered on $\alpha,\delta$ edges.  Next,
in a connection between $A$ and $B$ we have either
two edges labeled $\beta$ already covered, or two
edges labeled $\gamma$: in the diagram, suppose that
the ``lower gadget'' is in $A$, then $\gamma$ is in
a gray square of that gadget; and as the upper gadget
is in $B$ and in that edge the upper $\gamma$ is covered
by a white cycle, it is already covered.  Thus we can
use a cycle with two $\beta$ edges and two $\mu$'s, and
one $\mu$ is left out.  This happens twice in a connection
between two gadget, so we add two cycles and one pair
of left-out $\mu$'s,  a total of $3$ sets.

If a connection is inside $A$ or inside $B$, then the
uncovered edges have one $\beta$ and one $\gamma$ and
they form a path of $5$ edges, which can be covered with
$2$ sets, and since this happens twice, we use $4$ sets.

Summarizing, we used $2n\times (12+3)+3n\times 3+c=39n+c$ sets.
This proves~{\bf (i')}.

Now, we prove~{\bf (ii')}. 
Suppose that we have a cover with $39n+c$ sets.
We have to normalize it so it will have the form of a
cover derived from a cut, without increasing the number
of sets.  The potential introduced above allows to make
local analysis during the normalization.  A set with
potential $p<1$ has a {\em penalty} of $1-p$, and we
have the sum of penalties equal $c$.

We can assign the penalty to node gadgets.  If a set with
a penalty is contained in some $G_u$ than the assignment
is clear.  If we have a set of two edges, then we assign
penalty of $0.25$ to each edge with potential $0.25$ and
if such an edge is contained in $G_u$, we assign the
penalty to $G_u$.

If $G_u$ has a penalty of $1$ or more, we remove $G_u$
from consideration and recursively normalize the cover
of the remaining gadgets.  Once we make this normalization,
we partition the remaining nodes into $A$ and $B$.  If
a node $u$ has at most one neighbor in $A$ we insert $u$ to
$A$, meaning, we cover it with gray cycles etc, and
we will add $19.5+1$ sets (an edge not covered counts as
half of a set, because we can combine them in pairs).

\begin{wrapfigure}[5]{r}{1.5in}
\vspace{-2ex}
\epsfig{file=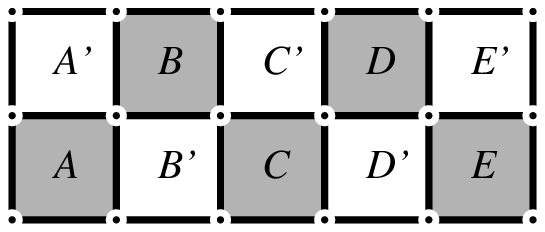,width=1.5in}
\end{wrapfigure}

Thus remains to normalize the cover of $G_u$ assuming
that its penalty is at most $0.75$.  Consider the
central horizontal cycle of the grid of $G_u$: it has
$12$ edges, and no two of them can belong to the same
full sibling set with more than $2$ edges; moreover, the
sets of at least $3$ edges to which they belong are fully
contained in $G_u$.  Because $G_u$ obtain at
most $0.75$ in penalties, at least $9$ edges of that $12$-cycle
are covered by full siblings $4$-cycles.  Consider the
longest connected fragment of such covered edges; assume
that they are covered with gray cycles.

Suppose that the last two cycles in that fragment are
$A$ and $B$ in the last diagram.  We want to change the
solution without increasing the number of set and use also
cycle $C$. If $C$ contains a set $S$ used in the current solution,
we can enlarge $S$ (making some other sets smaller) and our
fragment is extended.  If $C$ contains two edges contained
in two-edge sets, we can combine the sets so the latter two
are in one set, and again we can force $C$ into our solution.
So every edge of $C$ is in a different set from the current solution
and at most one of these sets is a pair.

Consider the edge on the boundary of $B'$ and $C$; if it is in
a set of more than $3$ edges, that set is contained in $C$ -- and
we excluded that case, or in $B'$ -- but only two edges of $B'$
remain uncovered.  Hence this edge is contained in a set with
two edges only, and it gets a penalty of $0.25$ that is delivered to $G_u$.

Consider the edge on the boundary of $C$ and $C'$.  According to our
case analysis, it is contained in a set of at least $3$ edges, and
which has only one edge in $C$, so this set is contained in $C'$.
Because $A$ covers one edge of $C'$, we have a set of exactly $3$
edges that gets a penalty of $0.25$, and thus $G_u$ already got $0.5$
of penalty.

We repeat the same reasoning at the other end of the fragment and
we double the penalty to $1$. The only doubt we can have is that
we are counting one of the penalties twice.  But this is not
possible: the other end of the fragment cannot be covered by $C$,
and it cannot be covered by $D$, as we use the set $C'\setminus B$
which overlaps $D$.  If the other end of our fragment is covered with
$E$, then we get penalties for the boundary of $D$ and $D'$, and
for the set $D'\setminus E$ and we have no double counting. Other cases are
similar.

Now an explicitly normalized node gadget has a center row covered
with $12$ cycles of the same color. The wrap-around edges with
$\alpha,\delta$ labels can be included in paths of $3$ edges -- and with
potential $1$; note that after we committed ourselves to $12$ ``central''
cycles, the edges of such a path do not belong to any other set with
more than two edges.  Now the uncovered edges are only in the
connection gadgets and they form sets of $5$ edges, with no connections
between them.  We have two such $5$-tuples for each connection.

We split the nodes according to the colors used in their gadgets:
gray cycles are in set $A$ and white cycles are in set $B$.  If we have a
$5$ tuple of an $A-B$ connection, its uncovered edges form a cycle
and an edge, so we can cover it with $1.5$ sets and we cannot do
any better.  If we have an $A-A$ or $B-B$ connections, the
uncovered edges form a path of $5$ edges and we much cover them
with two sets.

This completes the hardness reduction.

On the algorithmic side, we can use the result of 
Berman and Krysta~\cite{BerK03}. 
For polynomial time, we have to round the rescaled weights to small
integers, so the approximation ratio should have some $\epsilon$ added.
The $2$-IMP with rescaled weight has an 
approximation ratio of $\beta a$, where
for $a=3$ $\beta=2/3$, 
for $a=4$ $\beta=0.6514$
and
for $a>4$ $\beta=0.6454$. 
We can greedily find a maximal packing with sets of size $4$ and find $1/2$ of
the remaining sets of size $3$ using $2$-IMP algorithm of~\cite{BerK03}.  
Easy analysis shows that that this gives an approximation ratio of $3/2$.
\end{proof}

\begin{remark}
Using a reduction again from $3$-MAX-CUT that is similar in flavor
to the above proof (but with different gadgets, different covering
components and simpler case analysis) one can prove that,  
assuming RP$\neq$NP,
there is no $((1182/1181)-\eps)$-approximation algorithm for \All$_{n,\ell}$
even if $a=6$ and $\ell=O(n)$
for any constant $\varepsilon>0$.
\end{remark}

\section{Inapproximability for \All$_{n,\ell}$ and \TAll$_{n,\ell}$ for $a=n^\delta$} 

\begin{Lemma}\label{high-inapprox}
For any two constants $0<\eps<\delta<1$ with $a=n^\delta$,
\All$_{n,\ell}$ and 
\TAll$_{n,\ell}$ are
$n^\eps$-inapproximable 
assuming NP$\not\subseteq$ZPP.
\end{Lemma}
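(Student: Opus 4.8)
The plan is to reduce the graph coloring problem to the cover version of the allele problem, using the $2$-label cover reformulation introduced in the proof of Theorem~\ref{basic-approx} so that both \All$_{n,\ell}$ and \TAll$_{n,\ell}$ are handled at once. Recall from Section~\ref{3maxcut-proven} that, for any two constants $0<\eps<\delta<1$, it is hard (unless NP$\subseteq$ZPP) to approximate $\chi^\ast(G)$ within a factor of $|V|^{\eps}$ even on instances $G=(V,E)$ with $\Delta^\ast(G)\le |V|^{\delta}$; I would use exactly such instances, identifying the vertex set $V$ with the universe $\cU$ of individuals, so that $n=|V|$.

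First I would build the $2$-label cover instance with one locus per edge. For an edge $e=\{u,v\}$, at the locus $\ell_e$ I give $u$ a private label, $v$ a different private label, and every other individual a common default label. Then a set of individuals has at most two labels at $\ell_e$ precisely when it omits $u$, omits $v$, or has size at most two; in particular, any set of size at least three that contains both endpoints of some edge has three labels at the corresponding locus and is therefore invalid, while every independent set (of any size) has at most two labels at every locus and is valid. Thus the valid sibling groups of size $\ge 3$ are exactly the independent sets of $G$ of size $\ge 3$. Translating labels into alleles as in Theorem~\ref{basic-approx} ($v\mapsto(v,v')$ for \All$_{n,\ell}$, $v\mapsto(v,v)$ for \TAll$_{n,\ell}$) turns this into genuine instances, with $\ell=|E|=O(n^2)$ loci and maximum sibling-group size bounded by $\Delta^\ast(G)\le n^{\delta}$, so that $a\le n^{\delta}$.

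Next I would relate the optimal cover size to $\chi^\ast(G)$. One direction is immediate: the color classes of an optimal coloring are independent sets, hence valid groups, so the minimum number of groups is at most $\chi^\ast(G)$. For the reverse, given any cover by $t$ valid groups I would first disjointify it (subsets of valid groups are valid, so this only decreases the count) to obtain a partition of $\cU$ into at most $t$ valid groups; the groups of size $\ge 3$ are independent sets and I color each with its own color, while the remaining groups cover at most $2t$ vertices which I color with fresh singleton colors, yielding a proper coloring with at most $2t$ colors. Hence $\chi^\ast(G)\le 2t$ whenever $\cU$ can be covered by $t$ valid groups, and combined with the easy direction the optimal cover size lies between $\chi^\ast(G)/2$ and $\chi^\ast(G)$.

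Finally I would transfer the gap: on a coloring instance where it is hard to distinguish $\chi^\ast(G)\le C$ from $\chi^\ast(G)> C|V|^{\eps}$, the cover optimum is at most $C$ in the first case and more than $C|V|^{\eps}/2$ in the second, giving a gap of $n^{\eps}/2$. Since the constant factor $2$ is absorbed by choosing the coloring hardness exponent slightly above the target (any value still below $\delta$), this yields $n^{\eps}$-inapproximability for every $\eps<\delta$, as claimed. The main obstacle is the unavoidable fact that every pair of individuals forms a valid sibling group, so valid groups of size $\le 2$ need not be independent sets; the disjointify-and-recolor argument of the previous paragraph is precisely what confines this slack to a harmless constant factor and keeps the reduction gap-preserving.
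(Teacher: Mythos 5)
Your proposal is correct and follows essentially the same route as the paper: the same per-edge-locus reduction from hard coloring instances with $\Delta^\ast(G)\le n^{\delta}$, the same characterization of valid sibling groups of size at least three as independent sets, and the same factor-two relation between the optimal cover size and $\chi^\ast(G)$. The only (immaterial) difference is in how the factor two is obtained: the paper two-colors the degree-at-most-one subgraph formed by monochromatic edges and takes a product coloring with the $k'$ group-colors, whereas you disjointify the cover and spend fresh singleton colors on the at most $2t$ vertices lying in groups of size at most two --- both arguments are valid and yield the same gap.
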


\begin{proof}
For any two constants $0<\eps<\delta<1$, 
consider a hard instance $G=(V,E)$ of the graph coloring problem with $n$
vertices $[n]=\{1,2,\ldots,n\}$
and $\Delta^\ast(G)\leq |V|^{\delta}$. As observed in the proof of Theorem~\ref{basic-approx}, 
it will be sufficient to translate this to an instance $\cal J$ of the $2$-label cover 
problem.
We will have a individual
for every vertex $i$. We will translate an edge $\{i,j\}\in E$ to
{\em exactly} $n-2$ ``forbidden triplets'' of individuals 
$\{\,\{i,j,k\}\,|\,k\in [n]\setminus\{i,j\}\}$ 
of the $2$-label cover problem such that each of these 
set of individuals cannot be a full sibling group. We call 
$\{i,j\}$ as the ``anchor'' of these triplets.
The translation is done by 
by introducing
a new locus and three labels $a$, $b$ and $c$,  
putting $a$ and $b$ as the labels of individuals $i$ and $j$ in this locus,
and putting $c$ as the label of every other individual in this locus.
Finally, we use the following 
distinctness gadgets, if necessary, to
ensure that all the individuals are distinct.
There are at most $O(n^2)$ such gadgets. 
The purpose of such gadgets is to make sure no two individuals are
identical, \IE, every pair of individuals differ in at least one locus,
while still allowing any subset of individuals to be in a full
sibling group.
Consider a pair of individuals $u$ and $v$ 
that have the same set of loci.
Select a {\em new} locus, two 
symbols, say $a$ and $b$, 
and put $a$ in the locus of all individuals except $v$ and put $b$ 
in the locus of $v$. 

It suffices to show that our reduction has the following properties:
\begin{description}
\item[(1)]
A set of $x\geq 3$ vertices of $G$ are independent if and only if
the corresponding set of $x$ individuals in $\cal J$ is a valid
full sibling group.

\item[(2)] 
If $G$ can be colored with $k$ colors then 
$\cal J$ can be covered with $k$ sibling groups.

\item[(3)] 
If $\cal J$ can be covered with $k'$ sibling groups then 
$G$ can be colored with no more than $2k$ colors.
\end{description}
Suppose that we have a set $S$ of independent vertices in $G$.
Suppose that the corresponding set of individuals in $\cal J$ cannot
be a full sibling group and thus must include a forbidden triplet
$\{i,j,k\}$ with $\{i,j\}$ as the anchor. Then $\{i,j\}\in E$,
thus $S$ is not an independent set. 
Conversely, 
suppose that the set of individuals $\cal J$ is
be a full sibling group. Then, they cannot include a forbidden triplet.
This verifies Property~{\bf (1)}. 

Suppose that $G$ can be colored with $k$ colors. We claim that the set of individuals corresponding
to the set of vertices with the same color constitute a sibling group for either problem.
Indeed, since the set of vertices of $G$ with the same color are
mutually non-adjacent, they do not include a forbidden triplet.
This verifies Property~{\bf (2)}. 

Finally, suppose that the instance of the generated 
$2$-label cover problem has a solution with $k'$
sibling groups. For each sibling group, select a new color and assign it to all the individuals
in the group. Now, map the color of individuals in $\cal J$
to the corresponding vertices of $G=(V,E)$. 
Let $E'\subseteq E$ be the set of edges which connect two vertices of the same color.
Note that in the graph $G'=(V,E')$ every vertex is of degree at most one since otherwise
the sibling group that contains these three individuals corresponding to 
the three vertices that comprise the two adjacent edges has a forbidden triplet.
Thus, we can color the vertices of $G'$ from a set $C$ of two colors.
Obviously, the graph $G''=(V,E\setminus E')$ can be colored with colors from a set $D$
of $k'$ colors.
Now, it is easy to see that $G$ can be colored with at most $k\leq 2k'$ colors: 
assign a new color to every pair in $C\times D$ and color a vertex with the color
$(c,d)\in C\times D$ where $c$ and $d$ are the colors that the vertex received in 
the coloring of $G'$ and $G''$, respectively. 
This verifies Property~{\bf (3)}. 
\end{proof}

\section{Approximating Maximum Profit Coverage (MPC)}

\begin{Lemma}\label{MPC-a-b}~\\
\noindent
{\bf (a)} 
MPC is NP-hard for $a\geq 3$ and $a^c$-inapproximable 
for arbitrary $a$ 
and some constant $0<c<1$
assuming P$\neq$NP 
even if every set has weight $a-1$, every element has weight $1$ and 
every set contains exactly $a$ elements.
The hard instances can further be restricted such that 
each element is a point in some underlying metric space and each set correspond to a ball of
radius $\alpha$ for some fixed specified $\alpha$.

\noindent
{\bf (b)} 
MPC is polynomial-time solvable for $a\leq 2$. Otherwise, 
for any constant $\eps>0$, 
MPC admits $(0.5a+0.5+\eps)$-approximation for fixed $a$ and
$(0.6454 a+\eps)$-approximation otherwise.
\end{Lemma}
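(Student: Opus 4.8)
The plan is to derive the hardness in part~(a) from the independent set problem on $a$-regular graphs (IS$_a$) by an approximation-preserving reduction, and to obtain the algorithms in part~(b) by reducing MPC to \emph{weighted set-packing} after replacing every set by all of its ``trimmed'' subsets.

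\emph{Hardness.} Given an instance $G=(V,E)$ of IS$_a$, I take the universe to be $\cU=E$ with profit $w_e=1$ for every edge, and for each vertex $v$ a set $A_v=\{e\in E: e\text{ is incident to }v\}$ of cost $q_{A_v}=a-1$. As $G$ is $a$-regular every $A_v$ has exactly $a$ elements, so the instance already meets the restrictions stated in~(a). For $\cP=\{A_v:v\in U\}$ the covered elements are the edges meeting $U$, whence $c(\cP)=(a|U|-t(U))-(a-1)|U|=|U|-t(U)$, where $t(U)$ counts edges with both ends in $U$. Deleting a vertex incident to an internal edge never lowers $|U|-t(U)$, so iterating turns any $U$ into an independent set of size at least $c(\cP)$; conversely an independent $U$ gives $c(\cP)=|U|$. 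Thus the MPC optimum equals the maximum independent set size of $G$ and an $r$-approximation for MPC yields one for IS$_a$, so the NP-hardness for $a\ge 3$ and $a^c$-inapproximability for general $a$ are inherited from the results recalled in Section~\ref{3maxcut-proven}. For the metric form I realize the very same instance in the shortest-path metric of the incidence graph $H$ on the point set $E$ together with one auxiliary center $c_v$ per vertex, with $c_v$ joined to exactly the edges incident to $v$; then with radius $\alpha=1$ the ball around $c_v$ meets $\cU$ in precisely $A_v$ (every non-incident edge is at distance at least $3$), and shortest-path distances are automatically a metric.

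\emph{Exact reduction for part~(b).} For each $A\in\cS$ and each $A'\subseteq A$ I introduce a candidate set $A'$ of weight $w(A')-q_A$, keeping only the positively-weighted candidates. Given any $\cP$, assigning every covered element to one set that contains it splits the cover into pairwise-disjoint pieces whose weights sum to exactly $c(\cP)$; dropping the non-positive pieces leaves a legal packing of weight at least $c(\cP)$, so the packing optimum is at least the MPC optimum. Conversely, from a disjoint packing I select the originals of the chosen pieces: the resulting $\cP$ has value at least the packing weight, and since merging two pieces of one original only raises the weight (costs being non-negative) each original is used once without loss. Hence the two optima coincide and every candidate has at most $a$ elements, so any $\rho$-approximation for weighted set-packing transfers to a $\rho$-approximation for MPC.

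\emph{The three regimes and the main obstacle.} When $a\le 2$ the candidates have size at most $2$ and number only $O(m)$, so weighted set-packing --- and therefore MPC --- is solved \emph{exactly} in polynomial time by maximum-weight matching. For a fixed $a\ge 3$ there are at most $m\,2^{a}$ candidates, a polynomial family, and Berman's local-search $\big(\frac{a+1}{2}+\eps\big)$-approximation~\cite{B00} applies directly, giving the $(0.5a+0.5+\eps)$ bound. The difficult case is non-constant $a$, where the $2^{a}$ trimmed subsets cannot be listed: here I would run the $2$-IMP algorithm of Berman and Krysta~\cite{BerK03} over this implicit family. The main obstacle is to test for an improving local move --- a small set of disjoint trimmed pieces of large total weight --- in polynomial time without enumerating subsets; I expect a dynamic program that, given the few originals involved in a candidate improvement, optimally partitions their shared elements among the pieces, which is exactly the ``implicit'' bookkeeping needed to realize the $(0.6454a+\eps)$-approximation and constitutes the technical core of the proof.
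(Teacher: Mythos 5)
Your reduction for part~(a) is the same as the paper's (universe $=E$, one set per vertex of cost $a-1$), and your normalization via $c(\cP)=|U|-t(U)$ is a correct, slightly more quantitative version of the paper's observation that discarding one of two overlapping sets never decreases the profit; the metric realization differs cosmetically (auxiliary centers at distance $1$ versus assigning length $\alpha$ to each edge) but both work. Likewise, your reduction of MPC to weighted set-packing over trimmed subsets, the exact matching algorithm for $a\le 2$, and the invocation of Berman's local search \cite{B00} for fixed $a$ all coincide with the paper's argument.

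The genuine gap is the case of non-constant $a$, which is where essentially all of the technical work in the paper's proof of part~(b) lives. You correctly identify the obstacle --- the $2^{a}$ trimmed subsets cannot be enumerated, so an improving $2$-IMP move must be found implicitly --- but you only state that you ``expect a dynamic program'' to resolve it, which is not a proof, and the difficulty is not quite the one you describe. The issue is not merely partitioning the shared elements of the two sets being inserted: when a new named set $B$ overlaps a set $A$ already in the packing, one must choose between deleting $A$ and trimming $A\cap B$ out of $B$, and the \emph{value} of that choice under the $2$-IMP potential $\sum_S w(S)^{\alpha}$ depends on the final (post-trimming) profit of $B$, which in turn depends on all the other overlap decisions simultaneously. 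The paper resolves this circularity with two observations you would need to supply: first, a convexity argument (comparing $(y_1+y_2)^{\alpha}+(x-y_2)^{\alpha}$ with $x^{\alpha}$) showing it is never optimal to split an overlap between two sets, so each overlap is allocated wholly to one side; and second, that because profits are rounded to a polynomial number of distinct values (needed anyway to bound the number of successful improvement steps), one can \emph{guess} the final profits of the two inserted sets, make all overlap decisions optimally under that guess, and keep only the self-consistent guesses. Without these two steps the claim of a polynomial-time $(0.6454a+\eps)$-approximation for arbitrary $a$ is unsupported.
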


\begin{proof}~\\
\noindent
{\bf (a)} 
Consider an instance of the independent set problem on a $a$-regular graph 
$G=(V,E)$. Build the following instance of the MPC problem. The universe $U$
is $E$. For every vertex $v\in V$, there is a set $S_v$ consisting of the edges incident
on $v$. Finally, set the weight of every element to be $1$ and the weight of every set to be $a-1$. 
Note that each set contains exactly $a$ elements. 

It is clear that an independent set of $x$ vertices correspond to a solution of the MPC problem of
profit $x$ by taking the sets corresponding to the vertices in the solution. Conversely, 
suppose that a solution of the MPC problem contains two sets $S$ and $S'$ that have a non-empty
intersection. Since each set contains exactly $a$ elements, removing one of the two sets from the
solution does not decrease the total profit. Thus, one may assume that
every pair of sets in a solution of the MPC problem has empty intersection. Then, such a solution 
involving $x$ sets of total profit $x$ correspond to an independent set of $x$ vertices.

If one desires, one can further restrict the instance of the MPC problem in {\bf (a)} above to the case where 
each element is a point in some underlying metric space and each set correspond to a ball of
radius $\alpha$ for some fixed specified $\alpha$. 
All one needs to do is to use the standard trick
of setting the weight of each edge in the graph to be $\alpha$ and define the distance between two
vertices to be the length of the shortest path between them.

\noindent
{\bf (b)} 
Consider the weighted set-packing problem and 
let $a$ denote the maximum size of any set. 
For fixed $a$, it is easy to use the algorithm for the weighted set-packing as a black box
to design a $a/2$-approximation for the MPC problem. For each set $S_i$ of MPC, consider
all possible subsets of $S_i$ and set the weight $w(P)$ of each subset $P$ to be the sum of weights of 
its elements minus $q_i$. Remove any subset from consideration if its weight is negative. The collection
of all the remaining subsets for all $S_i$'s form the instance of the weighted set-packing problem.

It is clear that a solution of the weighted set-packing will never contain two sets 
$S$ and $S'$ that are subsets of some $S_i$ since then the solution can be improved by 
removing the sets $S$ and $S'$ and adding the set $S\cup S'$ to the solution (the solution cannot
contain the set $S\cup S'$ because of the disjointness of sets in the solution). Thus, at most
one subset of any $S_i$ is used the solution of the weighted set-packing. If a subset
$S$ of some $S_i$ was used, we use the set $S_i$ in the solution of the MPC problem;
note that the elements in $S_i\setminus S$ must be covered in the solution by other sets
since otherwise there is a trivial local improvement.
In this way, a solution of the weighted set-packing of total weight $x$
corresponds to a solution of the MPC problem of total profit $x$. 
Conversely, in an obvious manner a solution of the MPC problem of total profit $x$ 
corresponds to solution of the weighted set-packing of total weight $x$.

For $a\leq 2$, weighted set-packing can be solved in polynomial time via 
maximum perfect matching in graphs.

For fixed $a>2$, Berman~\cite{B00} provided an approximation algorithm based on local improvements
for this problem produces an approximation ratio of $\frac{a+1}{2}+\eps$ for any constant $\eps>0$. 
An examination of the algorithm in~\cite{B00} shows that 
the running time of the procedure for our case is $O\left(2^{(a+1)^2} m^{a+1}\right)=O(m^{a+1})$. 

When $a$ is {\em not} a constant, 
Algorithm $2$-IMP of Berman and Krysta~\cite{BerK03} can be adapted for MPC to run in polynomial time.
For polynomial time, we have to round the rescaled weights to small
integers, so the approximation ratio should have some $\epsilon$ added.
The $2$-IMP with rescaled weight has an 
approximation ratio of $0.6454 a$ for any $a>4$. 
However, we need a somewhat complicated dynamic programming procedure
to implicitly maintain all the subsets
for each $S_i$ without explicit enumeration. 

Here are the technical details of the adaptation.
We will view sets that we can use as having {\em names} and elements.
A name of $A$ is a set $N(A)$ given in the problem instance, and elements form
a subset $S(A)\subset N(A)$.  The profit $w(S)$ is sum of weights of elements
minus the cost of the naming set, $p(A)=w(S(A))-c(N(A))$.

The algorithm attempts to insert two sets to the current packing and
remove all sets that overlap them; this attempt is successful if the
sum of weights raised to power $\alpha >1$ increases; more precisely,
the increase should be larger then some $\delta$, chosen is such a way
that it is impossible to perform more than some polynomial time of
successful attempts.  As a result, we can measure the weights of sets
with a limited precision, so we have a polynomially many different
possible weights.

When we insert set with name $B$ that overlaps a set $A$ currently
in the solution, we have a choice: remove set $A$ from the solution
or remove $A\cap B$ from $B$.  If we also insert a set with name $C$
we have the same dilemma for $A$ and $C$.  Our choice should maximize
the resulting sum of $w^\alpha(S)$ for $S$ in the solution.  

If we deal with two sets, we can define the quantities

\noindent
$x_A=p(A-B)$\\
$x_B=p(B-A)$\\
$w_{AB}=w(A\cap B)$.

If we include $A\cap B$ in $A$, the modified profit is
$(x_A+w_{AB})^\alpha+x_B^\alpha$.

If we include $A\cap B$ in $B$, and remove $A$, the modified profit is
$(x_B+w_{AB})^\alpha$.

Our problem is that we know 
$y_1=x_A^\alpha$ and
$y_1=w_{AB}$
but we do not know $x_B$, because the exact composition of $B$ depends on many decisions.
Thus we do not know if the following inequality holds for $x=x_B+x_{AB}$:

$$(y_1+y_2)^+(x-y_2)^\alpha \le x^\alpha .$$

It is easy to see that the left-hand-side grows slower than the right-hand side,
so once the inequality holds, it is true for all larger $x$.  For this reason it is
never optimal to split $A\cap B$ between the two sets, instead we allocate the
overlap to one of them.

The situation is similar when we insert two sets.  To decide how to handle each overlap
of the (names of) sets that we are inserting with the sets already in the solution, it
suffices to know their profits.  Because we measure profits with a bounded precision,
we can make every possible assumption about these two profits, make the decisions and
check if the resulting profits are consistent with the assumption; if not, we ignore
that assumptions.  Among assumptions that we do not ignore, we select one with the
largest increase of profits raised to power $\alpha$.  If one of them is positive, we
perform the insertion.

Thus we can select a pair of insertion in polynomial time even though we have
a number of candidates that is proportional to $n2^a$.  Thus our algorithm runs in
polynomial time even for $a >> \log n$.  Therefore we can achieve the approximation
ratio of $2$-IMP, \IE, $0.6454a+\eps$, which is better than factor $a$ offered by a greedy
algorithm: keep inserting a set with maximum profit that does not overlap an already
selected set.
\end{proof}

\section{Approximating $2$-coverage}

\begin{Lemma}\label{DS}~\\
\noindent
{\bf (a)} 
For $f=2$, 
$2$-coverage is $(1+\eps)$-inapproximable
for some constant $\eps>0$ unless \\
NP$\not\subseteq \cap_{\eps>0}$BPTIME$(2^{n^\eps})$
and admits $O(m^{\frac{1}{3}-\eps'})$-approximation for some constant $\eps'>0$.

\noindent
{\bf (b)} 
For arbitrary $f$, $2$-coverage admits $O(\sqrt{m})$-approximation.  
\end{Lemma}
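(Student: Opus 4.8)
For part~\textbf{(a)} the plan is to observe that, when $f=2$, $2$-coverage \emph{is} a (weighted) Densest Subgraph problem, and then transport both the hardness of~\cite{K04} and the algorithm of~\cite{FPK01} across this identification. Concretely, from a $2$-coverage instance with $f=2$ I would build a weighted graph $H$ whose vertices are the $m$ sets of $\cS$, discard every element of frequency $<2$ (such an element can never be covered twice and is irrelevant), and turn each remaining element, which lies in exactly two sets $S,S'$, into an edge $\{S,S'\}$; parallel elements are merged into one edge of weight equal to their multiplicity. For any $\cP\subseteq\cS$ an element is covered at least twice by $\cP$ iff both of its sets lie in $\cP$, so the number of doubly covered elements equals the total edge weight induced by $\cP$ in $H$. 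Hence choosing $\le k$ sets to maximize double coverage is the same as choosing $k$ vertices of $H$ of maximum induced weight (adding vertices never destroys edges, so ``$\le k$'' and ``exactly $k$'' coincide, and for fixed $k$ maximizing induced weight is maximizing average degree). Running the weighted Densest Subgraph algorithm of~\cite{FPK01} on $H$, which has $|V(H)|=m$, then yields the claimed $O(m^{1/3-\eps'})$-approximation. The same identification, read in reverse (vertices$\to$sets, edges$\to$frequency-$2$ elements), turns a Densest Subgraph instance into a $2$-coverage instance with $f=2$ in an exactly value-preserving way, so the $(1+\eps)$-inapproximability of Densest Subgraph under NP$\not\subseteq\cap_{\eps>0}$BPTIME$(2^{n^\eps})$~\cite{K04} carries over verbatim.

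For part~\textbf{(b)} the plan is to output the better of two polynomial-time greedy solutions and to argue that, for every value of the budget $k$, at least one of them is within $O(\sqrt m)$ of the optimum $\mathrm{OPT}$. The first (``direct'') algorithm simply returns the pair of sets with the largest intersection, of value $p=\max_{S\ne S'}|S\cap S'|$. A double-counting bound controls this regime: fixing an optimal solution $\cP^\ast$ with doubly covered set $D^\ast$, every $e\in D^\ast$ lies in at least two sets of $\cP^\ast$, so $\mathrm{OPT}=|D^\ast|\le\sum_{\{S,S'\}\subseteq\cP^\ast}|S\cap S'|\le\binom{k}{2}\,p$, giving approximation ratio at most $\binom{k}{2}=O(k^2)$, which is already $O(\sqrt m)$ whenever $k\le m^{1/4}$. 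The second algorithm is the one ``based on maximum coverage'': using the same incidence count, $\sum_{S\in\cP^\ast}|S\cap D^\ast|=\sum_{e\in D^\ast}f^\ast_e\ge 2\,\mathrm{OPT}$, so some set meets $\Omega(\mathrm{OPT}/k)$ doubly coverable elements; I would take as an anchor the set $S_0$ maximizing the number of its elements of frequency $\ge 2$ and then run the greedy maximum-coverage algorithm~\cite{KMN99} with the remaining $k-1$ sets on the sub-universe $S_0$, so that every newly covered element of $S_0$ becomes doubly covered. The point of taking the better of the two is that this anchored algorithm is effective exactly in the large-budget regime $k>m^{1/4}$ left open by the first algorithm, and balancing the two thresholds at $k\approx m^{1/4}$ produces the uniform $O(\sqrt m)$ bound.

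The main obstacle, and where I expect the real work to lie, is entirely in part~\textbf{(b)} and stems from general frequency: unlike single coverage, the number of elements covered \emph{at least twice} is not submodular (a single set added to an empty selection can create no double coverage at all), so the clean $1-1/\bee$ guarantee of~\cite{KMN99,F98} does not apply to the objective directly. The anchoring step is precisely the device that restores submodularity locally, by fixing one cover and reducing the search for the \emph{second} cover of $S_0$ to an honest maximum-coverage instance; the delicate part is to lower bound how much of $S_0$ the remaining budget can second-cover and to fix the threshold separating the two algorithms so that the worse of the two regimes is still $O(\sqrt m)$. It is also worth noting that this $O(\sqrt m)$ is provably weaker than the $O(m^{1/3-\eps'})$ of part~\textbf{(a)}: the exact Densest Subgraph identification is available only when $f=2$, and for general $f$ the loss incurred when an element is forced into many chosen sets is exactly what blocks the stronger bound.
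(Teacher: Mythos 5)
Your part \textbf{(a)} is essentially the paper's argument: identify $f=2$ instances with weighted Densest Subgraph by making the $m$ sets vertices and the doubly-coverable elements (weighted) edges, then import the hardness of~\cite{K04} and the $O(m^{1/3-\eps})$ algorithm of~\cite{FPK01}. That part is fine.

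Part \textbf{(b)} has a genuine gap, and the arithmetic of your ``balance at $k\approx m^{1/4}$'' does not close. Your first algorithm (output the single best pair) only guarantees ratio $\binom{k}{2}=\Theta(k^2)$, which is $O(\sqrt m)$ only for $k\le m^{1/4}$. Your second algorithm, as described, does not cover the remaining range: anchoring on the one set $S_0$ with the most frequency-$\ge 2$ elements already costs a factor of $k$ (you only guarantee $S_0$ contains $\Omega(\mathrm{OPT}/k)$ doubly-coverable elements), and then second-covering those elements with $k-1$ sets chosen from $m$ candidates can cost another factor of $\Theta(m/k)$ when the second covers are spread over many sets; the product is $\Theta(m)$, not $O(m/k)$, and a ``decoy'' instance (a set $S_0$ with slightly more frequency-$2$ elements than any optimal set, each second-covered by a distinct otherwise-useless set) realizes this. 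Even if the anchored algorithm did achieve $O(m/k)$, $\min(k^2,m/k)$ balances at $k=m^{1/3}$ and gives only $O(m^{2/3})$. The paper avoids both losses. For the small-$k$ regime it does not take one pair but runs greedy maximum coverage with budget $k/2$ over \emph{all} pairwise intersections $T_{i,j}=S_i\cap S_j$; since the doubly covered elements of an optimal solution lie in the $\binom{k}{2}$ sets $T_{i,j}$ with $i,j$ in that solution, the best $k/2$ of these already capture a $\frac{1}{k-1}$ fraction of $\mathrm{OPT}$, giving ratio $O(k)$ rather than $O(k^2)$. For the large-$k$ regime it does not anchor on a single set: it first spends $k/2$ sets on ordinary greedy maximum coverage to single-cover $\Omega(\mathrm{OPT})$ elements (using $\mathrm{OPT}(k/2,1,\cS)=\Omega(\mathrm{OPT}(k,2,\cS))$, so no factor-$k$ loss), and only then spends the remaining $k/2$ sets to second-cover a $\frac{k}{2m}$ fraction of them, giving ratio $O(m/k)$. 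Now $\min\bigl(O(k),O(m/k)\bigr)=O(\sqrt m)$ with the crossover at $k=\sqrt m$. Your closing intuition about non-submodularity of double coverage is the right concern, but the fix is to restore a factor-$\Theta(1)$ reduction to maximum coverage in \emph{both} regimes (pairs as pseudo-sets for small $k$; a two-phase cover-then-recover scheme for large $k$), not to anchor on one set.
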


\begin{proof}~\\
\noindent
{\bf (a)}
Consider an instance $<G,k>$ of the densest subgraph problem. 
Then, define an instance of the $(k,2)$-coverage problem such that 
$U=E$, there is a set for every vertex in $V$ that contains all the edges
incident to that vertex, and we need to pick $k$ sets. Note that 
for this instance $f=2$.

For the other direction, define a vertex for every set, connect two vertices
if they have a non-empty intersection with a weight equal to the number of
common elements. This gives an instance of {\em weighted} DS whose goal
is to maximize the sum of weights of edges in the induced subgraph and
admits a $O(m^{\frac{1}{3}-\eps})$-approximation for some constant $\eps>0$~\cite{FPK01}. 

\noindent
{\bf (b)}
For notational convenience it will be convenient to define the $(k,\ell)$-coverage
problem (for $\ell\geq 1$) which is same as the $2$-coverage problem with $k$ sets
to be selected except that every element must belong to at least $\ell$ selected sets
(instead of two selected sets). 
We will also use the following notations.  
OPT$(k,\ell,\cS)$ is the maximum value of the objective function for
the $(k,\ell)$-coverage problem on the collection of sets in $\cS$ and 
A$(k,\ell, \cS)$ is the value of the objective function for
the $(k,\ell)$-coverage problem on the collection of sets in $\cS$
computed by our algorithm. 
For notational convenience, let $\wp=1-(1/\bee)$.
We will give both an $O(k)$ and an $O(m/k)$ approximation which together
gives the desired approximation.

The following gives an $O(k)$-approximation. 
Create a new set $T_{i,j}=S_i\cap S_j$ for every pair of indices $i\neq j$.
Run the $(k/2,1)$-coverage $\wp$-approximation algorithm on the
$T_{i,j}$'s and output the elements and, for each selected $T_{i,j}$,
the corresponding $S_i$ and $S_j$. Note that each element is covered at least
twice. One can look at all the $k\choose 2$ pairwise intersections of 
sets in an optimal solution of $(k,2)$-coverage on $\cS$, consider the
$k/2$ pairs that have the largest intersections and thus conclude that
an optimal solution of $2$-coverage on $\cS$ covers no more than 
$O(k)$ times the number of elements in an optimal solution of the
$(k/2,1)$-coverage on the $T_{i,j}$'s. 

To get an $O(m/k)$-approximation, first note that
OPT$((k/2),1,\cS)\geq$ OPT$(k,2,\cS)$. Run the $\wp$-approximation
algorithm to select the collection of sets $\cT\subseteq\cS$ to approximate 
OPT$((k/2),1,\cS)$. 
For each remaining set in $\cS\setminus\cT$, remove all elements that do not belong to
the sets in $\cT$ and remove all elements that are already covered twice in $\cT$. 
We know that if we were allowed to choose all of the
$m-k$ remaining sets in $\cS\setminus\cT$ we would cover all the elements in the sets $\cT$.
But since we are allowed to choose only additional $k/2$ sets, we 
choose those $k/2$ sets from $\cS\setminus\cT$ that cover the maximum number of elements 
in the union of sets in $\cT$. This involves again running the $\wp$-approximation 
algorithm. We will cover at least a fraction $k/(2m)$ of the maximum number of elements.
\end{proof} 

\section{Conclusion and Further Research}

In this paper we investigated four covering/packing problems
that have applications to several problems in bioinformatics.
Several questions remain open on the theoretical side. For example,
can stronger inapproximability 
results be proved for \All$_{n,\ell}$ and \TAll$_{n,\ell}$
intermediate values of $a$ and $\ell$ that are excluded in our proofs? 

\vspace*{0.2in}
\noindent
{\bf Acknowledgments}

We would like to thank the anonymous reviewers for their helpful comments
that led to significant improvements in the presentation of the paper.

\end{document}